\title{The monitoring problem for timed automata}
\titlerunning{}
\author{Alejandro Grez}{Pontificia Universidad Cat\'olica de Chile, Chile}{ajgrez@uc.cl}{}{}
\author{Filip Mazowiecki}{Max Planck Institute for Software Systems, Germany}{filipm@mpi-sws.org}{}{}
\author{Micha\l{} Pilipczuk}{University of Warsaw, Poland}{michal.pilipczuk@mimuw.edu.pl}{}%
       {This work is a part of project TOTAL that has received funding from the 
        European Research Council (ERC) under the European Union's Horizon 2020 
        research and innovation programme, grant agreement No.~677651.}
\author{Gabriele Puppis}{University of Udine, Italy}{gabriele.puppis@uniud.it}{}{}
\author{Cristian Riveros}{Pontificia Universidad Cat\'olica de Chile, Chile}{cristian.riveros@uc.cl}{}{}
\authorrunning{A. Grez, F. Mazowiecki, M. Pilipczuk, G. Puppis and C. Riveros}
\keywords{timed automata, monitoring problem, data stream, dynamic data structure}
\newcommand{\gabriele}[2][]{\todo[color=cyan!30, #1]{{\bf G.} #2}\ignorespaces}
\theoremstyle{plain}
\newenvironment{question}{\smallskip\em}{\smallskip}
\newcommand{\wLt}{\widehat{\Lt}}
\newcommand{\rt}{\mathtt{root}}
\newcommand{\ft}{\lambda}
\newcommand{\clk}{\mathtt{y}}
\newcommand{\lst}{\mathtt{l}}
\newcommand{\lstr}{\mathtt{r}} 
\newcommand{\lret}{\mathtt{lret}}
\newcommand{\frs}{\mathtt{f}}
\newcommand{\Kt}{K}
\newcommand{\Lt}{L}
\newcommand{\within}{\;\operatorname{WITHIN}\;}
\newcommand{\wAa}{\widehat{\Aa}}
\newcommand{\ww}{\widehat{w}}
\newcommand{\wq}{\widehat{q}}
\newcommand{\N}{\mathbb{N}}
\newcommand{\Oh}{\mathcal{O}}
\newcommand{\nat}{\mathbb{N}}
\newcommand{\real}{\mathbb{R}}
\newcommand{\realplus}{\mathbb{R}_{> 0}}
\newcommand{\set}[1]{\{#1\}}
\newcommand{\Aa}{\mathcal{A}}
\newcommand{\true}{\mathtt{true}}
\newcommand{\init}[1][]{\mathtt{init}_{#1}}
\renewcommand{\read}[1][]{\mathtt{read}_{#1}}
\newcommand{\accepted}[1][]{\mathtt{accepted}_{#1}}
\newcommand{\const}{\bar{c}}
\newcommand*{\eg}{e.g.\@\xspace}
\newcommand{\lnode}{\alpha}
\newcommand{\froot}{\beta}
\newcommand{\fnode}{\gamma}
\newcommand{\parent}{\mathtt{parent}}
\newcommand{\rank}{\mathtt{rank}}
\newcommand{\eltime}{\mathtt{timestamp}}
\newcommand{\aclock}{\mathtt{clock}}
\newcommand{\elchildren}{\mathtt{\#children}}
\newcommand{\elstates}{\mathtt{states}}
\newcommand{\elnext}{\mathtt{next}}
\newcommand{\elprev}{\mathtt{prev}}
\newcommand{\elrank}{\mathtt{rank}}
\newcommand{\node}{\mathtt{node}}
\tikzset{double arrow/.style args={colored by #1 and #2}%
                                  {line width=3pt,#1, 
                                   postaction={draw,thin,#2}, 
                                  }}
\tikzstyle{dot} = [draw,shape=circle,fill, minimum size=1mm, inner sep=0pt, outer sep=0pt]
\tikzstyle{bigwhite} = [draw,shape=circle, minimum size=0.5cm, inner sep=0pt, outer sep=0pt]
\tikzstyle{biggray} = [draw,shape=circle, fill=gray!50, minimum size=0.5cm, inner sep=0pt, outer sep=0pt]
\tikzstyle{arrow} = [->, >=stealth, shorten >=1pt, shorten <=1pt, 
\tikzstyle{harrow} = [->, >=stealth, shorten >=1pt, shorten <=1pt, 
\tikzstyle{run} = [->, shorten >=1pt, shorten <=1pt, rounded corners=6, thick,
\tikzstyle{cell} = [draw, minimum width=0.4cm, minimum height=0.4cm,
\tikzstyle{cellh} = [minimum width=0.4cm, minimum height=0.4cm,
\tikzstyle{circle} = [draw, shape=circle, minimum width=0.3cm, minimum height=0.3cm,
\tikzstyle{circlef} = [draw, shape=circle, fill=black, minimum width=0.3cm, minimum height=0.3cm,
\tikzstyle{circleh} = [shape=circle, minimum width=0.3cm, minimum height=0.3cm,
\tikzstyle{year} = [draw=gray, fill=gray!25, rounded corners=2mm]
\tikzstyle{task} = [draw=color2, fill=color2!25]
\begin{document}

\maketitle
\abovedisplayskip=6.5pt 
\belowdisplayskip=6.5pt 

\newcommand\inlineeqs{}  
\newcommand\ifinlineeqselse[2]{\ifdefined\inlineeqs #1\else #2\fi}

\makeatletter
\let\mydollar=$  
\makeatother
\def\<{\ifinlineeqselse{\mydollar}{\begin{equation*}}} 
\def\>{\ifinlineeqselse{\mydollar}{\end{equation*}}}

\newenvironment{itemize*}{\ifinlineelse{%
                            \let\olditem=\item%
                            \def\item{}%
                            \leavevmode\unskip%
                          }{%
                            \begin{itemize}
                          }}%
                         {\ifinlineelse{%
                            \let\item=\olditem%
                            \leavevmode\unskip%
                          }{%
                            \end{itemize}
                          }}
\newenvironment{enumerated*}{\ifinlineelse{%
                            \let\olditem=\item%
                            \def\item{}%
                            \leavevmode\unskip%
                          }{%
                            \begin{enumerated}
                          }}%
                         {\ifinlineelse{%
                            \let\item=\olditem%
                            \leavevmode\unskip%
                          }{%
                            \end{enumerated}
                          }}

\let\paragraph\undefined
\NewDocumentCommand\paragraph{s m}{\smallskip\par\noindent {\bfseries #2}~}

\renewcommand{\subparagraph}[1]{\smallskip\noindent\underline{\emph{#1}}~~\ignorespaces}

\begin{abstract}
We study a variant of the classical membership problem in automata theory, 
which consists of deciding whether a given input word is accepted by a given
automaton. We do so under a different perspective, that is, 
we consider a dynamic version of the problem, called {\em{monitoring problem}}, 
where the automaton is fixed and the input is revealed as in a stream, 
one symbol at a time following the natural order on positions. 
The goal here is to design a dynamic data structure that can be queried about 
whether the word consisting of symbols revealed so far is accepted by the automaton, 
and that can be efficiently updated when the next symbol is revealed.
We provide complexity bounds for this monitoring problem, 
by considering
timed automata that
process symbols interleaved with timestamps. The main contribution is that
monitoring of a one-clock timed automaton, with all its components but
the clock constants fixed, can be done in amortised constant time per 
input symbol.
\end{abstract}

\begin{picture}(0,0)
\put(392,25)
{\hbox{\includegraphics[width=40px]{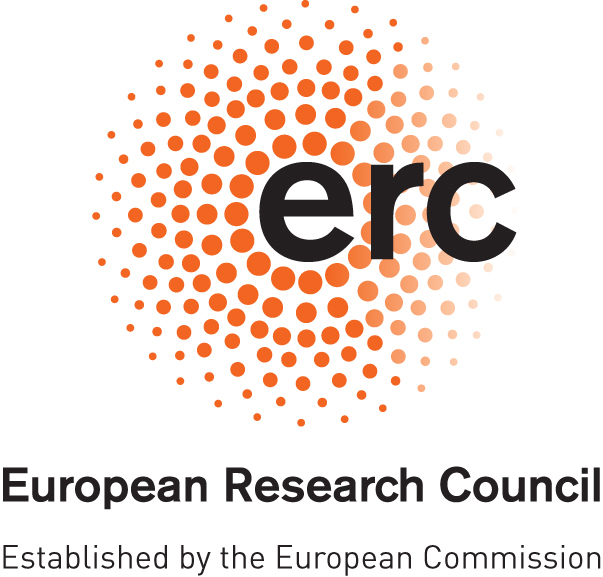}}}
\put(382,-35)
{\hbox{\includegraphics[width=60px]{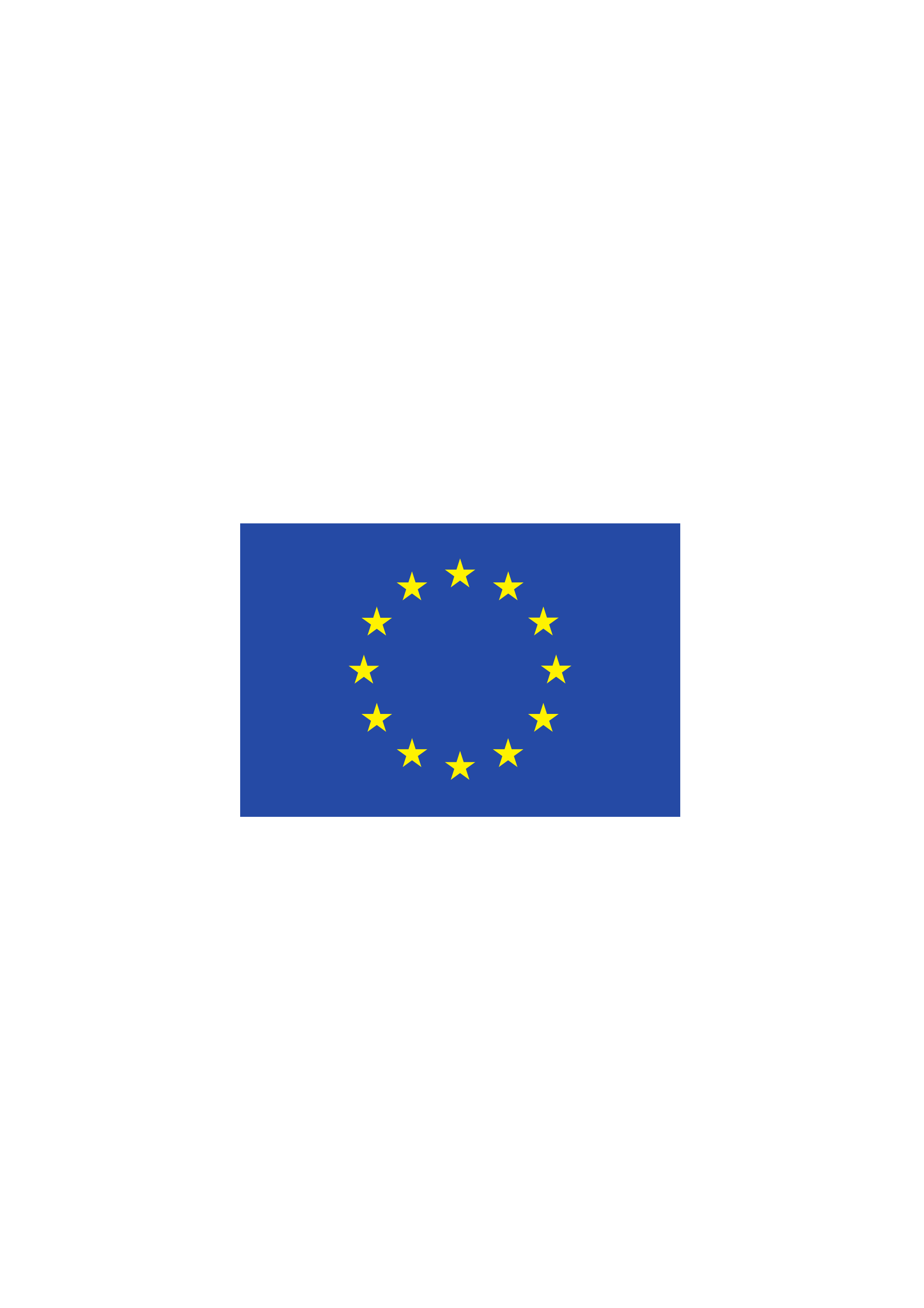}}}
\end{picture}

\newpage

\section{Introduction}
\label{sec:introduction}

%

We consider the problem of \emph{monitoring} streams of data against a property. 
Precisely, the problem assumes that there is a fixed property $P\subseteq \Sigma^*$ 
over a data domain $\Sigma$ and an arbitrary stream $w = d_1 d_2 d_3 \dots \in \Sigma^\omega$,
and the goal is to tell, at each step $n$, whether the $n$-element prefix 
$d_1 d_2 \dots d_n$ of the stream verifies the property $P$.
For example, in the simpler form of the problem, the property $P$ 
may be a regular language given by a finite state automaton. 
In this case the monitoring problem could be thought of as a 
natural variant of the membership problem in automata theory.
In general, the monitoring problem requires to devise, for a given
property $P$, a \emph{dynamic data structure} that maintains the 
information whether the current stream prefix satisfies $P$ or not.
This data structure should be updated whenever the next symbol of the input stream arrives, and we would like to perform these update operations
\emph{efficiently}: possibly in constant time, and thus independently of the 
input stream and of the amount of data consumed at any moment. 
Of course, the complexity of the operations 
may depend on the property $P$, but this is usually considered fixed, 
since the focus here is on the asymptotic complexity for monitoring streams 
of arbitrary length.

In this paper we consider the monitoring problem for ``regular''
properties enhanced with time constraints. Specifically, we consider 
properties given by \emph{timed automata}~\cite{AlurD94}, and streams of data that are
timed words, that is, sequences of letters from a finite alphabet interleaved
with timestamps. 
The main contribution is that monitoring of a one-clock timed automaton
$\Aa$ can be done in amortised constant time per consumed input element, 
assuming that $\Aa$ is fixed beforehand. 
In fact, we provide a finer complexity result, as we design
a dynamic data structure that supports monitoring with time
complexity that does not depend on the magnitude of the 
clock constants that appear 
in the transitions of $\Aa$.
We also show that when the stream is discrete (when the timestamps
increase by exactly one unit of time) then monitoring
in constant (non-amortised) time is possible.

We complement these algorithmic results with a complexity lower bound, in which we consider
timed automata with multiple clocks and additive constraints~\cite{BerardD00}. 
We show that, subject to the {\sc{3SUM}} Conjecture,
the resulting family of properties cannot 
be monitored in amortised constant time, and not even in amortised strongly sub-linear time.
The {\sc{3SUM}} Conjecture is among the most popular hypotheses considered in computational geometry and in fine-grained complexity theory; see \eg an overview in~\cite[Appendix~A]{AbboudWY18}.


The considered timed variant of the monitoring problem turns out to be particularly
meaningful in the context of data streaming algorithms~\cite{babcock2002models, datar2002maintaining} 
and complex event recognition and processing~\cite{cugola2012processing}. There the input 
stream is produced by some sensors, or other similar systems, 
and the importance of each data item rapidly decays 
over time. In this setting, one would expect that the relevant data comes with an 
implicit time horizon, implying that the properties to be verified can be often
relativised to one or more time windows. 
For example, in complex event processing, one can imagine a specification language
that extends classical regular expressions with constructs of the form
$E\ \mathtt{within}\ X$, requiring that a factor of the stream exists 
that satisfies the regular expression $E$ and that spans over at most $X$ time units.
Some properties are easily captured by one-clock timed automata, and thus can be efficiently 
monitored using our dynamic data structure.


\paragraph{Related work.} 
The monitoring problem we study here somewhat resembles the context of {\em{streaming algorithms}}; see e.g.~\cite{babcock2002models,datar2002maintaining,henzinger1998computing} for works with a similar motivation.
In this context, a typical problem is to compute (possibly approximately) some statistic or aggregate over a data sequence, where the main point is to assume a severe restriction on the space usage. 
Note that in our setting,
we focus on obtaining low time complexity of updates and queries, rather than studying the space complexity, so our work leans more towards the area of dynamic query evaluation~\cite{berkholz2017answering,idris2019efficient}. 
For Boolean properties (like in our monitoring problem), several papers~\cite{lewis1965memory,magniez2014recognizing,babu2013streaming} have considered streaming algorithms 
for testing membership in regular and context-free languages. 
Another variant of the problem was considered in~\cite{GanardiHL16,GanardiHKLM18,ganardi19}, 
where the regular property is verified on the last $N$ letters of the stream
instead of the entire prefix up to the current position. 

The closest to our setting is the work~\cite{tangwongsan2017low}, 
which studies the monitoring problem for finite automata over a sliding window.
The authors provide a structure that takes constant time to update. 
We explain in Section~\ref{sec:problem} that this is a special case of our results for timed automata. 
In Appendix~\ref{appendix:examples} we give other application examples of our result.

\section{Preliminaries}
\label{sec:preliminaries}
\paragraph*{Finite automata.}
A \emph{finite automaton} is a tuple $\Aa = (\Sigma, Q, I, E, F)$, 
where $\Sigma$ is a finite alphabet, $Q$ is a finite set of states, 
$E \subseteq Q \times \Sigma \times Q$ is a transition relation, 
and $I,F \subseteq Q$ are the sets of initial and final states.
A run of $\Aa$ on a word $w = a_1 \ldots a_n \in \Sigma^*$ 
is a sequence
\< 
  \rho ~=~ q_0 \trans{a_1} q_1\trans{a_2} \ldots \trans{a_n} q_n, 
\>
where $(q_{i-1}, a_i, q_i) \in E$ for all $i=1,\dots,n$.
Moreover, $\rho$ is a \emph{successful} run if $q_0 \in I$ and $q_n \in F$.
The language \emph{recognized} by $\Aa$ is the set 
$\sL(\Aa) = \{w \in \Sigma^*\ \colon\ \text{there is a successful run of $\Aa$ on $w$}\}$.

\paragraph*{Timed automata.}
Let $X$ be a finite set of clocks, usually denoted by $\mathtt{x},\mathtt{y},\ldots$.
A \emph{clock valuation} is any function $\nu : X \to \real_{\ge0}$ from clocks to
non-negative real numbers.
\emph{Clock conditions} are formulas defined by the grammar
\<
C_X ~:=~ \true \mid 
         \mathtt{x} < c \mid \mathtt{x} > c \mid \mathtt{x} = c \mid 
         C_X \land C_X \mid C_X \lor C_X,
\>
where $\mathtt{x} \in X$ and $c \in \real_{\ge0}$. 
By a slight abuse of notation, we denote by $C_X$ the set of all clock conditions over the clock set $X$.
Given a clock condition $\gamma$ and a valuation $\nu$,
we say that $\nu$ \emph{satisfies} $\gamma$, and write $\nu \models \gamma$, 
if the arithmetic formula obtained from $\gamma$ by substituting each clock $\mathtt{x}$ 
with its value $\nu(\mathtt{x})$ evaluates to true. 

A \emph{timed automaton} is a tuple $\Aa = (\Sigma,Q,X,I,E,F)$, where
$Q$, $\Sigma$, $I$, $F$ are defined exactly as for finite automata, 
$X$ is a finite set of clocks, and 
$E \subseteq Q \times \Sigma \times C_X \times Q \times 2^X$ is a transition relation.
We say that $c\in\bbR_{\ge0}$ is a \emph{clock constant} of $\Aa$ if $c$ appears in some
clock condition of a transition from $E$.
A \emph{configuration} of $\Aa$ is a pair $(q, \nu)$, where $q \in Q$ 
and $\nu$ is a clock valuation. 
Recall that finite automata process words over a finite alphabet $\Sigma$; likewise, 
timed automata process timed words over an alphabet of the form $\Sigma\uplus\realplus$, 
with $\Sigma$ finite.%
\footnote{In the literature, timed words are often defined as words 
          over $\Sigma \times \realplus$, 
          meaning that every letter is tagged with a timestamp.
          Here, instead, we define a timed word as a sequence of letters
          possibly interleaved with time spans from $\realplus$.
          In this case, the timestamp of a letter is implicitly determined by 
          the sum of the time spans before it. 
          The current presentation enables a simpler definition of transitions.}
Given a timed automaton $\Aa$ and a timed word $w = e_1 \ldots e_n \in (\Sigma \cup \realplus)^*$,
a run of $\Aa$ on $w$ is any sequence
\<
  \rho ~=~ (q_0,\nu_0) \trans{e_1} (q_1,\nu_1)\trans{e_2} \ldots \trans{e_n} (q_n,\nu_n),
\>
where each $(q_i, \nu_i)$ is a configuration and
\begin{itemize}
 \item if $e_i \in \realplus$, then $q_{i+1} = q_i$ and 
       $\nu_{i+1}(\mathtt{x}) = \nu_{i}(\mathtt{x}) + e_i$ for all $\mathtt{x} \in X$;
 \item if $e_i \in \Sigma$, then 
       there is a transition $(q_i,e_i,\gamma,q_{i+1},Z) \in E$
       such that $\nu_i \models \gamma$ and 
       either $\nu_{i+1}(\mathtt{x})=0$ or $\nu_{i+1}(\mathtt{x})=\nu_{i}(\mathtt{x})$
       depending on whether $\mathtt{x}\in Z$ or $\mathtt{x}\in X\setminus Z$.
\end{itemize}
Thus, the set $Z$ in a transition $(q_i,e_i,\gamma,q_{i+1},Z) \in E$ corresponds to the subset of clocks that are reset when firing the transition.
Note that the values of the other clocks stay unchanged.

A run $\rho$ as above is \emph{successful} if $q_0\in I$, $\nu_0(\mathtt{x})=0$ 
for all $\mathtt{x}\in X$,
and $q_n \in F$.
The language \emph{recognised} by $\Aa$ is the set 
$\sL(\Aa) = \{w \in (\Sigma\cup\realplus)^*\ \colon\ \text{there is a successful run of $\Aa$ on $w$}\}$.

\paragraph{Size of an automaton.}
The size of a finite automaton $\Aa = (\Sigma, Q, I, E, F)$ is 
defined as $|\Aa| = |Q| + |E|$.
This is asymptotically equivalent to essentially every possible definition of size of a
finite automaton that can be found in the literature.
The size of a timed automaton $\Aa = (\Sigma,Q,X,I,E,F)$ is instead defined 
as $|\Aa| = |Q| + |X| + \sum_{(p,a,\gamma,q,Z) \in E} |\gamma|$, where
$|\gamma|$ is the number of atomic expressions (i.e.~expressions of the 
form $\mathtt{x} < c$, $\mathtt{x} > c$, $\mathtt{x} = c$) appearing
in the clock condition $\gamma$.
\emph{Note that the size of a timed automaton does not take into account the magnitude 
of the clock constants that appear in its clock conditions.}
The reader may consider these values as external parameters, provided to the 
monitoring algorithm on initialisation.


\paragraph{Computation model.} 
In our setting, both the clock constants and the time spans read from the input stream can be arbitrary real numbers.
For this reason, we will use the real RAM model of computation, which is a model widely adopted in computational geometry and in data structures handling floating-point data.
In this model, we have integer memory cells that can store integers and floating-point memory cells that can store real numbers. There are no bounds on the bit length or precision of the stored numbers.
We assume that all basic arithmetic operations --- negation, addition, subtraction, multiplication, and division --- can be performed in unit time.
Note that modulo arithmetics and rounding is not included in the model; in fact, we will not use multiplication and division on real numbers either.

We remark that 
if one assumes that all the numbers on input, 
such as clock constants or numbers appearing in the stream, are integers of bit length at most $M$,
then in our algorithms we may rely only on integers of bit length $\Oh(M+\log N)$, where 
$N$ is an upper bound on the total length of the stream that we expect to process.
Therefore, in this case we may assume the standard word RAM model with words of bit length $\Oh(M+\log N)$.

\section{The monitoring problem}
\label{sec:problem}
In the algorithmic setting that we are going to consider, a timed automaton
processes an arbitrarily long and possibly infinite timed word, hereafter 
called \emph{stream}. 
It does so by consuming one element of the stream at a time (be it a letter 
or a time span), while updating its configuration. 
We would like to design a monitoring algorithm for an arbitrary stream 
processed by a fixed timed automaton $\Aa$, that readily answers the following query: 

\begin{question}
Does the time automaton accept the current prefix of the stream?
\end{question}

\noindent
In answering the following query, the monitoring algorithm can make use of a 
suitable data structure, that is initialised beforehand and is maintained 
consistent along the computation. Of course the goal here is to do so efficiently, 
that is, in amortised constant time assuming that $\Aa$ is fixed. We explain below 
what we mean precisely by this.

By saying that $\Aa$ is \emph{fixed} we mean that all its 
components (e.g.~input alphabet, sets of states, etc.) are fixed, with the only
exception of the constants that appear in the clock conditions of the transitions
of $\Aa$. Such constants are instead treated as formal parameters, 
and their actual values, represented by a tuple $\const$, are provided as input 
to the algorithm upon initialisation.
Note that the definition of the size $|\Aa|$ of $\Aa$ (cf.~Section \ref{sec:preliminaries}) 
reflects the above assumption, in the sense 
that it takes into account only the components of $\Aa$ that are considered 
to be fixed. 

Once $\Aa$ is fixed, we can define the problem of {\em{monitoring}} $\Aa$ as 
the problem of designing a suitable dynamic data structure that supports certain 
operations 
efficiently
upon reading the input stream.
Precisely, the data structure should support the following operations:
\begin{itemize}
 \item $\init(\const)$, that initialises the data structure for the timed automaton $\Aa$
        with a tuple $\const$ of clock constants;
 \item $\accepted()$, that queries whether the prefix of the stream consumed up to the current moment
       is accepted by $\Aa$;
 \item $\read(e)$, that consumes the next element $e$ from the input stream, 
       be it a letter from $\Sigma$ or a time span from $\realplus$, 
       and updates the data structure accordingly.
\end{itemize}
%
%
The running time of each of these operations needs to be as low as possible,
possibly bounded by a constant that is independent of the input stream,
of the number of stream elements consumed so far, but also of the clock constants of $\Aa$.
However, as we assume that $\Aa$ is fixed, this constant running time may (and will) depend on $|\Aa|$.
Formally, we say that a data structure \emph{supports monitoring $\Aa$ in constant time} 
if the first operation $\init(\const)$ and every subsequent execution of 
$\accepted()$ or $\read(e)$ take time $\Oh(1)$.
Similarly, we say that a data structure
\emph{supports monitoring $\Aa$ in amortised time $f(n)$} 
if the first operation $\init(\const)$ and every subsequent execution of $\accepted()$ 
take time $\Oh(1)$, whereas if the execution of the $i$-th $\read(e)$ operation
takes time $t_i$, then for all $n$ we have
$\frac{1}{n} \cdot \sum_{i=1}^n t_i = \Oh(f(n))$.
In particular, we have \emph{amortised constant time} 
(resp.~\emph{amortised strongly sub-linear time}) 
if in the previous equation we can let $f(n)=1$ 
(resp.~$f(n)=n^{1-\delta}$ for any $\delta>0$).

We remark, for readers familiar with 
parametrised complexity,
that it may be convenient to consider the monitoring problem as a problem parametrised by the 
size of the fixed automaton $\Aa$.
Then the ``constants'' hidden in the $\Oh(\cdot)$ notation are nothing 
but 
computable functions of the parameter $|\Aa|$.
This inscribes our work into the setting of {\em{dynamic parameterised algorithms}}, 
which is still a under-developed area; see e.g. the discussion in~\cite{AlmanMW17}.

\paragraph*{Results.}
In this paper we describe a data structure that supports monitoring
in either constant time or amortised constant time, depending on the form of
the input stream. More precisely, we say that a stream $w$ is \emph{discrete}
if its elements range over $\Sigma\uplus\{1\}$, that is, if all time spans coincide
with the time unit $1$. We will prove the following theorem:

\begin{restatable}{theorem}{OneClock}\label{th:one_clock}
Fix a timed automaton $\Aa$ with a single clock. 
There is a data structure that supports monitoring $\Aa$ 
\begin{itemize}
  \item in constant time on discrete streams,
  \item in amortised constant time on arbitrary streams.
\end{itemize}
\end{restatable}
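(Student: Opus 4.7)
The plan is to design a dynamic data structure that, for the fixed one-clock timed automaton $\Aa$, maintains the set of configurations reachable on the stream prefix read so far. A configuration is a pair $(q,v)$ with $q \in Q$ and $v \in \real_{\ge 0}$. During $\init(\const)$ we record the clock constants $c_1 < \dots < c_k$ (with $k \le |\Aa|$) and exploit the key observation that any two clock values exceeding $c_k$ satisfy exactly the same clock conditions of $\Aa$, and moreover remain indistinguishable under any future time elapse: such configurations can be collapsed into a single \emph{saturated} representative per state. The remaining clock values in $[0,c_k]$ are recorded implicitly through the timestamps $\tau$ of the latest clock resets along witnessing runs; writing $T$ for the cumulative sum of time spans consumed so far, a configuration $(q,v)$ with $v \le c_k$ is represented by the pair $(q,\tau)$ where $v = T - \tau$.

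I would then implement $\read(\delta)$ by merely incrementing $T$ by $\delta$ and processing any threshold-crossing events scheduled in an auxiliary priority queue: whenever $T - \tau$ reaches the next clock constant we record a region change, and once $T - \tau$ exceeds $c_k$ the entry is merged into the saturated slot for its state. For $\read(a)$ I would iterate through the transitions $(q,a,\gamma,q',Z) \in E$: using an indexing of entries by the region they currently occupy, the entries at $q$ satisfying $\gamma$ are located in $\Oh(1)$ time per transition; each such entry is either transferred to $q'$ by pointer surgery on shared sublists (if $Z = \emptyset$) or replaced by the single fresh reset at time $T$ (if $Z = \{\mathtt{x}\}$). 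The query $\accepted()$ is answered in $\Oh(1)$ via a counter that tracks whether any final state currently carries a nonempty entry list.

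The main obstacle is the amortised analysis. The total cost of $\read(\delta)$ across the stream is charged to individual entries: each entry undergoes at most $k = \Oh(|\Aa|)$ region crossings before being absorbed into the saturated class, so the total time-span work is $\Oh(n)$ for a stream of length $n$. The cost of each $\read(a)$ call must be kept to $\Oh(|\Aa|)$, which forces transfers of entries between states to be implemented by manipulating pointers to shared sublists rather than copying the entries one by one; a delicate point is that several transitions on the same letter may claim overlapping source sublists, so the data layout has to guarantee a bounded amortised cost per entry per $\read(a)$ call. For the discrete case every time span equals $1$, so all timestamps $\tau$ lie in $\nat$ and the region-crossing events happen at predictable integer delays. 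Exploiting this regularity, the amortised operations above can be replaced by worst-case $\Oh(1)$ ones by precomputing during $\init$ the effect of a single time step and of each letter on the symbolic snapshot of the data structure.
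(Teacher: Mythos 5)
Your proposal correctly identifies several of the key ingredients of the paper's construction: partitioning clock values into regions determined by the constants $c_1<\dots<c_k$, representing a clock value implicitly as $T-\tau$ via a timestamp $\tau$ so that time elapse is a single increment of $T$, and amortising the cost of time-span updates by charging each region crossing to an individual entry (each entry crosses $\Oh(k)$ boundaries over its lifetime). This part matches the paper.

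However, the crux of the argument --- the constant-time implementation of $\read(a)$ --- is left as a gap. You organise entries per state and per region, and then hope to transfer them along transitions by ``pointer surgery on shared sublists,'' noting yourself that ``several transitions on the same letter may claim overlapping source sublists.'' This is exactly where the naive approach breaks, and no construction is given to resolve it. The difficulty is twofold: if a state $q$ has two outgoing non-resetting $a$-transitions to $q_1\ne q_2$, the entries at $q$ must effectively appear under both $q_1$ and $q_2$; and if $q$ and $q'$ both have $a$-transitions into the same $q_1$, their entry lists (which may share timestamps) have to be merged and deduplicated. Neither operation is constant time per call, and even a ``bounded amortised cost per entry per $\read(a)$ call'' would give total cost proportional to the number of live entries per letter, which is linear, not constant. (Note also that the paper's $\updLett$ is worst-case constant time; no amortisation is needed or available on the letter side.)

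The missing idea in the paper is to index timestamps not by state but by \emph{state set}: the inner structure keeps, for each region, a sorted list of timestamps together with a shallow forest whose leaves are the timestamps and whose roots carry pairwise-distinct nonempty subsets of $Q$ and integer ranks in $\{1,\dots,2^{|Q|}\}$. There are thus at most $2^{|Q|}-1=\Oh(1)$ roots. A letter update replaces each root's state set $X$ by $\Phi(X)$ (the set of states reachable from $X$ by $a$-transitions enabled on this region), which is $\Oh(1)$ operations in total, and when two roots acquire the same state set they are merged by attaching the lower-rank root as a child of the higher-rank root. The rank invariant guarantees the forest depth stays bounded by $2^{|Q|}$, so a leaf can recover its state set in constant time. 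This union-by-rank style grouping is precisely what avoids ever touching individual entries during a letter update, and it is the ingredient your proposal needs but does not supply.
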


\smallskip\noindent
The proof of Theorem~\ref{th:one_clock} is deferred to Section~\ref{sec:structure-new}.
We do not know whether this theorem can be generalised to 
timed automata with more than one clock without sacrificing the amortised 
constant time complexity of updates. 

On the other hand, we are able to establish a negative result for a slightly 
more powerful model of timed automata, called timed automata with 
additive constraints (see e.g.~\cite{BerardD00}).
Formally, a \emph{timed automaton with additive constraints} is defined 
exactly as a timed automaton -- that is, as a tuple $\Aa = (\Sigma,Q,X,I,E,F)$ 
consisting of an input alphabet, a set of states, a set of clocks, etc. 
--- but clock conditions are now allowed to satisfy an extended grammar obtained 
by adding new rules of the form $\big(\sum_{\mathtt{x}\in Z} \mathtt{x}\big) \sim c$, where 
$Z\subseteq X$ and $\sim \; \in \set{<, >, =}$.

Our lower bound further relies on a complexity theoretical assumption
related to the {\sc{3SUM}} problem.
Recall that in the {\sc{3SUM}} problem we are given a set $S$ of positive real numbers 
and the question is to determine whether there exist $a,b,c\in S$ satisfying $a+b=c$.
It is easy to solve the problem in time $\Oh(n^2)$, where $n=|S|$. 
Our lower bound is based on the hypothesis that 
the quadratic running time for 3SUM cannot be significantly improved.

\begin{restatable}[{3SUM} Conjecture]{conjecture}{SumConjecture}\label{cnj:3sum}
In the real RAM model, the {\sc{3SUM}} problem cannot be solved 
in strongly sub-quadratic time, that is, in time $\Oh(n^{2-\delta})$ 
for any $\delta>0$, where $n$ is the number of numbers on the input.
\end{restatable}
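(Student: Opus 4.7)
The plan will be unusually short, because the final statement is a long-standing \emph{conjecture}, not a theorem, and no proof is currently known; the paper itself does not claim one either, but adopts it as a working hypothesis. The best unconditional algorithm for {\sc 3SUM} in the real RAM model achieves only a polylogarithmic improvement over the naive $\Oh(n^2)$ bound (Gr\o{}nlund and Pettie), and no subsequent refinement comes close to the $n^{2-\delta}$ barrier. A proof would therefore have to exclude \emph{every} algorithm running in strongly sub-quadratic time, including those not yet imagined.

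A serious attempt would presumably first reduce {\sc 3SUM} to a problem with visible geometric or algebraic structure --- the classical equivalence with detecting three collinear points in the plane, say, or with finding a degenerate triangle among $n$ points. One could then try to prove an unconditional lower bound in a restricted model: $\Omega(n^2)$ bounds are known in the linear decision tree model with $3$-linear queries, and weaker bounds with $k$-linear queries for larger $k$. The heart of a proof would then consist in \emph{lifting} such a restricted lower bound to the full real RAM, which additionally allows amortisation, indirect addressing, and implicit table lookups via integer memory cells.

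This lifting step is the main obstacle, and it has resisted decades of effort. Proving any super-linear, let alone $\Omega(n^{2-o(1)})$, unconditional lower bound against the real RAM (or word RAM) for a problem in $\mathsf{P}$ is well beyond the reach of current techniques: no such bound is known for any natural polynomial-time problem, and none seems achievable by the combinatorial, algebraic, or information-theoretic tools at our disposal. My ``plan'' therefore honestly terminates here, and I would simply do what the authors do: treat Conjecture~\ref{cnj:3sum} as a standard working hypothesis in the style of fine-grained complexity, and use it to derive the paper's subsequent conditional lower bound for multi-clock timed automata with additive constraints, rather than attempt its proof.
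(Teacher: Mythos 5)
You have correctly recognised that the statement is a \emph{conjecture}, not a theorem: the paper does not prove it and does not claim to, instead adopting it as a fine-grained complexity hypothesis on which Theorem~\ref{th:two_clocks} is conditioned. Your decision to decline a proof and treat it as a working assumption is exactly the paper's own treatment, and your surrounding commentary (the Gr{\o}nlund--Pettie polylog improvement, the refutation of the original ``no $o(n^2)$'' phrasing, the decision-tree lower bounds, and the obstruction to unconditional RAM lower bounds) is accurate and consistent with the paper's discussion.
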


\smallskip\noindent
The {\sc{3SUM}} Conjecture is among the most popular hypotheses considered 
in computational geometry and fine-grained complexity theory; see e.g. 
an overview in~\cite[Appendix~A]{AbboudWY18}.
It was introduced by Gajentaan and Overmars~\cite{GajentaanO95,GajentaanO12} 
in a stronger form, which postulated the non-existence of \emph{sub-quadratic}
algorithms, that is, achieving running time $o(n^2)$.
This formulation was refuted by Gr{\o}nlund and Pettie~\cite{GronlundP18}, who 
gave an algorithm for {\sc{3SUM}} with running time $\Oh(n^{2}/(\log n / \log \log n)^{2/3})$ 
in the real RAM model, which can be improved to $\Oh(n^{2}(\log \log n)^2/\log n)$ 
when randomisation is allowed. However, the existence of a strongly sub-quadratic 
algorithm is widely open and conjectured to be hard.

We can now state a lower bound for a variant of the monitoring problem:

\begin{restatable}{theorem}{TwoClocks}\label{th:two_clocks}
If the {\sc{3SUM}} Conjecture holds, then there
is a two-clock timed automaton $\Aa$ with additive constraints such that
there is no data structure supporting monitoring $\Aa$ 
in amortised strongly sub-linear time, and hence also not in amortised
constant time.
\end{restatable}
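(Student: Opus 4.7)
The plan is to derive a contradiction from the 3SUM Conjecture via a fine-grained reduction. Given a 3SUM instance $S = \{s_1 < s_2 < \cdots < s_n\}$ of positive reals, sorted in $\Oh(n \log n)$ time, I will construct in further time $\Oh(n)$ a timed word $w$ of length $\Oh(n)$ together with initialisation data $\const$ for a single fixed two-clock automaton $\Aa$ with additive constraints, so that $\Aa$ accepts $w$ if and only if there exist indices $i, j, k$ with $s_i + s_j = s_k$. If some data structure supports monitoring $\Aa$ in amortised strongly sub-linear time, then feeding $w$ through $\read$ and issuing a single $\accepted()$ query at the end decides the 3SUM instance in total time $\Oh(n^{2-\delta})$, contradicting Conjecture~\ref{cnj:3sum}.

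The stream $w$ is organised in three phases separated by two marker letters $\#_1$ and $\#_2$. Fix constants $K_1 > 2 s_n$ and $K_2 > K_1 + 2 s_n$, supplied through $\const$. In the first phase we emit $n$ letters $\alpha$ at timestamps $2 s_1, 2 s_2, \ldots, 2 s_n$; then $\#_1$ appears at time $K_1$; in the second phase we emit $n$ letters $\beta$ at timestamps $K_1 + 2 s_1, \ldots, K_1 + 2 s_n$; then $\#_2$ appears at time $K_2$; finally, in the third phase we emit $n$ letters $\gamma$ at timestamps $K_2 + s_1, \ldots, K_2 + s_n$. All intermediate time spans are positive by the choice of $K_1, K_2$ and the sortedness of $S$, and each is computed from $S$ by a single subtraction.

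The automaton $\Aa$ uses clocks $\mathtt{x}, \mathtt{y}$ and a chain of control states enforcing the three-phase structure via $\#_1, \#_2$. During the $\alpha$-phase, $\Aa$ non-deterministically resets $\mathtt{x}$ at exactly one letter $\alpha_i$; during the $\beta$-phase it non-deterministically resets $\mathtt{y}$ at exactly one $\beta_j$; and during the $\gamma$-phase a guarded transition into an absorbing accept state fires at any letter $\gamma_k$ whose firing time satisfies $\mathtt{x} + \mathtt{y} = c$, where $c := 2 K_2 - K_1$ is passed via $\const$. A direct calculation at the moment $\gamma_k$ is consumed gives $\mathtt{x} = K_2 + s_k - 2 s_i$ and $\mathtt{y} = K_2 + s_k - K_1 - 2 s_j$, hence $\mathtt{x} + \mathtt{y} = c + 2(s_k - s_i - s_j)$. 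The additive guard therefore fires for some triple $(i, j, k)$ precisely when $s_i + s_j = s_k$, as required.

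The main creative step is exactly this encoding: with only two clocks and a single additive constraint $\mathtt{x} + \mathtt{y} = c$ whose right-hand side must be a constant independent of the stream, one has to squeeze three free indices into a single equation. The scaling trick --- using timestamps $2 s_i$ and $2 s_j$ in the first two phases, but $s_k$ in the third --- exactly cancels the coefficient $2$ with which the current time inevitably enters $\mathtt{x} + \mathtt{y}$. Once this is in place the rest is routine bookkeeping: the pair $(w, \const)$ is built in $\Oh(n \log n)$ time, after which the hypothetical monitor performs one $\init$, $\Oh(n)$ calls to $\read$ costing $\Oh(n^{2-\delta})$ in total under the assumed amortised bound, and one $\Oh(1)$-time $\accepted$ query, so that 3SUM is decided in strongly sub-quadratic time.
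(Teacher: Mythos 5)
Your proposal is correct and uses essentially the same approach as the paper: a three-phase timed word encoding $S$, nondeterministic resets of $\mathtt{x}$ and $\mathtt{y}$ at positions scaled by $2$ in the first two phases, and a test of the additive guard $\mathtt{x}+\mathtt{y}=c$ at positions scaled by $1$ in the third phase, so that the guard fires precisely when $a+b=c$ for some $a,b,c\in S$. Your choices of a five-letter alphabet and the constant $c=2K_2-K_1$ are cosmetic departures from the paper's two-letter alphabet and constant $4M$; the final step (feeding $w$ through $\read$ and querying $\accepted()$ to obtain a strongly sub-quadratic 3SUM algorithm) is the same.
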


\smallskip\noindent
The proof of the above theorem is in the appendix, together with 
further discussion about the {\sc{3SUM}} Conjecture.
Again, we do not know whether a negative result similar the above 
one also holds for plain timed automata (without additive constraints).

\smallskip
Before presenting the data structure for Theorem~\ref{th:one_clock},
we show an application of this result. 

\paragraph{An application example.}
Here we consider only 
streams that are discrete. In fact, we will enforce a slightly more
restricted form for such streams: we will assume that they belong to
the language $(\{1\} \cdot \Sigma)^\omega$, namely, that the letters 
from $\Sigma$ are interleaved by the time unit~$1$.
%
Let $\Aa = (\Sigma, Q, I, E, F)$ be a finite automaton and $C>0$ a number
defining the width of a window. We consider the membership problem in the 
\emph{sliding window model} (see, for instance \cite{ganardi19}), that is, we process 
an arbitrary input $w=a_1 a_2 a_3 \ldots$ over $\Sigma$, consuming one letter at 
a time from left to right, while maintaining the answer to the following query: 
is the sequence of the last $C$ consumed letters accepted by $\Aa$ or not?
Below, we explain how this problem can be reduced to our monitoring problem 
for timed automata and discrete streams. 

We map $w = a_1a_2a_3\ldots$ to a corresponding discrete
stream $\ww = 1 a_1 1 a_2 1 a_3 \ldots$, and modify $\Aa$ to obtain a 
corresponding timed automaton $\wAa$, as follows. We introduce a new state $\wq$, 
which will be the only final state in $\wAa$, and a clock $\mathtt{x}$. 
We then replace every transition $(q,a,q')$ of $\Aa$ with the transition 
$(q,a,\true,q',\emptyset)$. Note that these transitions have
a vacuous clock condition, hence they are applicable in $\wAa$ whenever 
the original transitions of $\Aa$ are so. 
In addition, when the former transition $(q,a,q')$ 
reaches a final state $q'\in F$, 
we also have a transition 
$(q,a,\mathtt{x}=C,\wq,\emptyset)$ in $\wAa$. 
Finally, we add looping transitions on the initial states that reset the clock, 
that is, transitions of the form $(q,a,\true,q,\{\mathtt{x}\})$, with  
$q\in I$ and $a \in \Sigma$.
Figure~\ref{fig:window} shows the timed automaton $\wAa$ corresponding
to an automaton $\Aa$ recognising~$ab^*a$.

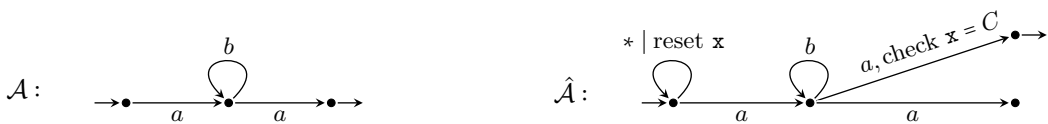
\begin{figure}
\vspace{-3mm}
\centering
\begin{tikzpicture}[scale=0.9]
\clip (-1.9,-0.3) rectangle (13.5,1.4);
\begin{scope}
  \draw (-1.5,0.2) node {$\Aa:$};
  
  \draw (0,0) node [dot] (p) {};
  \draw (1.5,0) node [dot] (q) {};
  \draw (3,0) node [dot] (r) {};

  \draw [arrow] (-0.5,0) to (p);
  \draw [arrow] (r) to (3.5,0);
  \draw [arrow] (p) to node [black, below=-0.05] {\small $a$} (q);
  \draw [arrow] (q) to node [black, below=-0.05] {\small $a$} (r);
  \draw [arrow] (q) to [out=45, in=135, looseness=35] node [black, above=-0.05] {\small $b$} (q);
\end{scope}
\begin{scope}[xshift=8cm]
  \draw (-1.5,0.2) node {$\hat\Aa:$};
  
  \draw (0,0) node [dot] (p) {};
  \draw (2,0) node [dot] (q) {};
  \draw (5,0) node [dot] (r) {};
  \draw (5,1) node [dot] (r') {};
  
  \draw [arrow] (-0.5,0) to (p);
  \draw [arrow] (r') to (5.5,1);
  \draw [arrow] (p) to [out=45, in=135, looseness=35] node [black, above=-0.05] {\small $*\mid\text{reset } \mathtt{x}$} (p);
  \draw [arrow] (p) to node [black, below=-0.05] {\small $a$} (q);
  \draw [arrow] (q) to node [black, sloped, above=-0.05] {\small $\qquad a, \text{check } \mathtt{x}=C$} (r');
  \draw [arrow] (q) to node [black, below=-0.05] {\small $a$} (r);
  \draw [arrow] (q) to [out=45, in=135, looseness=35] node [black, above=-0.05] {\small $b$} (q);
\end{scope}
\end{tikzpicture}

\caption{Reducing the sliding window membership problem to the monitoring problem.}\label{fig:window}
\vspace{-3mm}
\end{figure}

From the above construction it is clear that $\wAa$ accepts a prefix 
$1 a_1 \dots 1 a_n$ of $\ww$ if and only if $\Aa$ accepts the
$C$-letter factor $a_{n-C+1} \dots a_n$ of $w$. Thus, the acceptance
problem for $\Aa$ in the width-$C$ sliding window model is reduced to 
the monitoring problem for $\wAa$ over the stream $\ww$. 
By Theorem \ref{th:one_clock}, we know that there is a data structure
that supports monitoring in constant time. This means that we can process
one letter at a time from a word $w$, while answering in constant time 
whether $\Aa$ accepts the sequence of the last $C$ consumed letters.
Note that the complexity here is independent of the choice of $C$.

We also remark that there is a similar reduction from the sliding window
problem to the monitoring problem, where the timed automaton $\wAa$ uses 
a clock condition, say $\mathtt{x}=1$, that is fixed prior to the choice 
of the window length, and where the input timed word is of the form 
$\delta a_1 \delta a_2 \delta a_3 \dots$, with $\delta = \frac{1}{C}$.
This intuitively shows that, even if we fix the clock constants 
together with the timed automaton, the monitoring problem does not trivialise.

\section{Amortised constant-time data structure} 
\label{sec:structure-new}

We prove Theorem~\ref{th:one_clock} by describing 
a data structure for monitoring a given timed automaton.

\paragraph*{Notation and definitions.}
Let us fix, once and for all, a timed automaton 
$\Aa = (\Sigma,Q,X,I,E,F)$ with a single clock $\mathtt{x}$.
By adding a non-accepting sink state, if necessary, we may assume that for every $q\in Q$ and $a\in \Sigma$, 
some transition over letter $a$ can be always applied at $q$.
Since $\Aa$ uses only one clock, every configuration of $\Aa$ can be written simply as a pair $(q,t)$, 
where $q\in Q$ is the state and $t\in \real_{\geq 0}$ is the value of the clock $\mathtt{x}$.

Note that by fixing $\Aa$ we also fix the number of constants
that appear in the clock conditions in $E$. Let us enumerate them as
\<
  0=C_0 < C_1 <\ldots <C_k,
\>  
where, without loss of generality, we assume that the clock constant $C_0=0$ is always present.
For simplicity we also let $C_{k+1}=\infty$, assuming that $t < \infty$ for every $t \in \real$.
As $\Aa$ is fixed, we consider $|\Sigma|$, $|Q|$, and $k$ as fixed constants.



Consider now an arbitrary stream $w \in (\Sigma \cup \realplus)^\omega$.
For every $n\in\bbN$, let $w_n = w[1\ldots n]$ be the $n$-element prefix of $w$.
Recall that $w_n$ can be thought of as the stream prefix that is disclosed 
after $n$ operations $\read(e)$.
We say that a configuration $(q,t)$ is \emph{active} at step $n$ 
if there is a run of $\Aa$ on $w_n$ that starts in a configuration $(q_0,0)$ for some $q_0\in I$ and ends in $(q,t)$. 
We let $\Kt_n$ be the set of all configurations~$(q,t)$ that are active at step $n$.


\paragraph*{Partitioning the problem.}
It is clear that the monitoring problem essentially boils down to designing an efficient data structure that maintains $\Kt_n$ under reading consecutive elements from the stream.
This data structure should offer a query on whether $\Kt_n$ contains an accepting configuration.
The main observation is that configurations with clock values behaving in the same way with respect to the clock constants $C_1,\ldots,C_k$ satisfy exactly the same clock conditions in $E$.
Precisely, let us consider the partition of $\real_{\geq 0}$ into intervals
\begin{gather*}
J_0 = [C_0,C_0],\quad J_1 = (C_0,C_1),\quad J_2 = [C_1,C_1],\quad J_3 = (C_1,C_2),\quad J_4 = [C_2,C_2],\\
J_5 = (C_2,C_3),\quad \ldots \quad J_{2k-1} = (C_{k-1},C_k),\quad J_{2k} = [C_k,C_k],\quad J_{2k+1} = (C_k,C_{k+1}).
\end{gather*}
The following assertion is clear: for any two configurations $(q,t)$, $(q,t')$,
with $t,t'\in J_i$ for some $0\le i\le 2k+1$,
exactly the same transitions
are available in $(q,t)$ as in $(q,t')$.


For $n\in \bbN$ and $i\in \{0,\ldots,2k+1\}$, let 
\<
  \Kt_n[i] = \{\,(q,t)\in \Kt_n\ \colon\ t\in J_i\,\}.
\>
The idea is to maintain each set $\Kt_n[i]$ in a separate data structure.
Each of these data structures follows the same design, which we call the {\em{inner data structure}} 
and which is outlined below.

\newcommand{\Dt}{\mathbb{D}}
\newcommand{\inz}{\mathtt{init}}
\newcommand{\acc}{\mathtt{accepted}}
\newcommand{\ins}{\mathtt{insert}}
\newcommand{\updTime}{\mathtt{updateTime}}
\newcommand{\updLett}{\mathtt{updateLetter}}

\paragraph*{Inner data structure: an overview.}
Every inner data structure is constructed from an interval 
$J\in \{J_0,\ldots,J_{2k+1}\}$. 
We will denote it by $\Dt[J]$, or simply by $\Dt[i]$ when $J=J_i$.
Each structure $\Dt[J]$ stores a set of configurations $\Lt$ satisfying the following invariant: all clock values of configurations in $\Lt$ belong to $J$. 
In the final design we will maintain the invariant that the set $\Lt$ stored by $\Dt[i]$ 
at step $n$ is equal to $\Kt_n[i]$, 
but for the design of $\Dt[J]$ it is easier to treat $\Lt$ as an arbitrary set of configurations
with clock values in $J$.

The inner data structure should support the following methods: 
\begin{itemize}
 \item Method $\inz(J)$ stores the interval $J$ and initialises 
       $\Dt[J]$ by setting $\Lt=\emptyset$.
 \item Method $\acc()$ returns true or false, depending on whether $\Lt$ contains an accepting configuration, that is, a configuration $(q,t)$ such that $q\in F$.
 \item Method $\ins(q,t)$ adds a configuration $(q,t)$ to $\Lt$. 
       This method will be always applied with a promise that 
       $t\in J$ and 
       $t\leq t'$ for all configurations $(q',t')$ already present in $\Lt$. 
 \item Method $\updTime(r)$, where $r\in \real_{>0}$, increments the clock values of all configurations in $\Lt$ by $r$.
       All configurations whose clock values ceased to belong to $J$ are removed from $\Lt$, and they are returned by the method on output. 
       This output is organised as a doubly linked list of configurations, sorted by non-decreasing clock values.
 \item Method $\updLett(a)$ updates $\Lt$ by applying to all configurations in $\Lt$ 
       all possible transitions over the given letter $a\in\Sigma$.
       Precisely, the updated set $\Lt$ comprises all configurations $(q,t)$ that can be obtained from configurations belonging to $\Lt$ before the update using transitions over $a$ that do not reset the clock.
       All configurations $(q,0)$ which can be obtained from $\Lt$ using transitions over $a$ that do reset the clock are not included in the updated set $\Lt$, but are instead returned by the method as a doubly linked list.
\end{itemize}
%
In Section~\ref{sec:inner} we will provide an efficient implementation of the inner data structure, which is encapsulated in the following lemma.

\begin{lemma}\label{lem:inner}
For every $J=J_i$, $i\in \{0,1,\ldots,2k+1\}$, the inner data structure $\Dt[J]$ can be implemented so that methods $\inz()$, $\acc()$, $\ins(\cdot,\cdot)$,  and $\updLett(\cdot)$
run in constant time, while method $\updTime(\cdot)$ runs in time linear in the size of its output.
\end{lemma}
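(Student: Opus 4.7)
The plan is to represent $\Lt$ by grouping configurations with equal clock value into \emph{time groups}: I would maintain a doubly linked list of time groups, sorted by their common clock value, where each time group stores a single clock value together with the subset $S \subseteq Q$ of states that occur in $\Lt$ at that clock value. Because every clock value in $\Lt$ lies in the fixed interval $J$, all configurations in one time group satisfy exactly the same clock conditions of $\Aa$. Hence, for every letter $a \in \Sigma$, one can precompute in $O(1)$ space the no-reset successor function $f_{a,J} \colon 2^Q \to 2^Q$ and the reset successor function $g_{a,J} \colon 2^Q \to 2^Q$, describing how state sets transform when $a$ is applied inside $J$.

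To implement $\updTime(r)$ in time linear in its output, I would keep a global offset $\Delta$ and store each time group's clock value shifted by $-\Delta$; the operation simply adds $r$ to $\Delta$ and walks the sorted list from its tail, evicting each time group whose actual clock value has escaped $J$ and unpacking it into its at most $|Q| = O(1)$ constituent configurations, which are prepended to the output list so that its sorted order is preserved. The crux is realising $\updLett(a)$ in constant time even though it has to rewrite the state sets of all time groups simultaneously. For this I would \emph{bucket} the time groups by their current state set: since state sets live in $2^Q$, there are only $2^{|Q|} = O(1)$ buckets at any point, and each bucket is itself a doubly linked list of time groups, with each time group storing only a pointer to its bucket (not the state set explicitly). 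Then $\updLett(a)$ relabels every bucket's label $S$ to $f_{a,J}(S)$, and whenever two buckets end up with the same new label, it merges them by splicing their two lists. The reset output $\{(q,0) : q \in g_{a,J}(U)\}$, where $U$ is the union of the current bucket labels, has size at most $|Q|$ and is built by a single sweep over the bucket array in $O(1)$ time.

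The main obstacle is that splicing two bucket lists must not involve walking the absorbed bucket's time groups, or the constant-time bound for $\updLett$ would be broken. I would sidestep this with a union--find indirection: the absorbed bucket survives as a \emph{forwarding node} whose parent pointer leads to the surviving bucket, and the current bucket of a time group is located by chasing these parent pointers upward from its stored bucket-pointer. Because at most $2^{|Q|}$ bucket objects are ever created, union-by-rank keeps the forwarding chains of length $O(|Q|) = O(1)$ in the worst case, so every bucket look-up is still constant time. The remaining operations then fall into place: $\inz(J)$ sets $\Delta = 0$ and initialises $2^{|Q|}$ empty buckets, one per possible state set, in $O(1)$ time; $\acc()$ returns the value of a single counter tracking how many time groups currently reside in buckets whose label intersects $F$, which is kept consistent with $O(1)$ extra work inside every other operation; and $\ins(q,t)$ either creates a new front time group in bucket $B[\{q\}]$, or --- if the front time group already has clock value $t$ --- moves it from its bucket $B[S]$ to $B[S \cup \{q\}]$, each case in $O(1)$.
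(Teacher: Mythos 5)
Your design is essentially the paper's: a global time offset so that $\updTime$ is a single addition plus a tail-walk, a sorted doubly-linked list of per-clock-value groups, and a bucketing of those groups by state set with merges realised via indirection and ranks. The paper phrases the bucketing as a rooted forest whose leaves are the list nodes and whose roots carry $\elstates$ and $\elrank$; your ``forwarding node'' chains are precisely the internal paths of that forest. So the strategy matches.

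There is, however, one step in your argument that does not hold as written. You claim that ``at most $2^{|Q|}$ bucket objects are ever created'' and deduce from ordinary union-by-rank that forwarding chains have length $O(|Q|)$. The premise is false: whenever $\ins$ needs a live bucket for a label $\{q\}$ that no live bucket currently carries, a fresh bucket object must be created (you cannot safely recycle a forwarding node, since existing time groups still reference it along their chains), so the total number of bucket objects created over the lifetime of the structure is unbounded. Consequently, the standard union-by-rank analysis, which bounds rank by the logarithm of the number of elements \emph{ever} in the structure, does not apply. The correct fix --- and the one the paper uses --- is to cap ranks at $2^{|Q|}$: because at any moment there are at most $2^{|Q|}-1$ \emph{live} roots (pairwise distinct non-empty labels), a fresh root can always be given the smallest rank in $\{1,\dots,2^{|Q|}\}$ unused by a live root, and one maintains the invariant that the subtree below a root of rank $r$ has depth at most $r$. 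New roots start at depth $1$, and merging a rank-$i$ root under a rank-$j>i$ root keeps the invariant since $i+1\le j$. This gives chain length $O(2^{|Q|})$, which is still $O(1)$ for fixed $\Aa$, so your conclusion is right but your stated reason is not. Two smaller omissions: you should explicitly exclude empty buckets when forming the union $U$ of labels in $\updLett$ (the paper avoids this by removing forest nodes that lose their last child, propagating children-counter decrements upward), and your counter-based $\acc()$ needs per-bucket size fields transferred on merges and adjusted on relabelings --- doable, but more bookkeeping than the paper's simpler scan of the $O(1)$ roots.
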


\noindent
We postpone the proof of Lemma~\ref{lem:inner} and we show now how to use it to prove Theorem~\ref{th:one_clock}.
That is, we design an {\em{outer data structure}} that solves the monitoring problem for $\Aa$. 

\subsection{Outer data structure}\label{sec:inner}

The outer data structure consists of a list of data structures $\Dt[0],\ldots,\Dt[2k+1]$, where each $\Dt[i]$ is a copy of the inner data structure constructed for the interval $J_i$.
We will maintain the following invariant:
\begin{enumerate}[{I}1.]
 \item\label{i:partition} After step $n$, for each $i\in \{0,1,\ldots,2k+1\}$ the data structure $\Dt[i]$ stores $\Kt_n[i]$.
\end{enumerate}

We first explain how the outer data structure implements the promised operations: initialisation, queries about the acceptance, and updates upon reading the next symbol from the stream $w$.
Then we discuss the amortised complexity of the updates.

\subparagraph{Initialisation.} 
Given the tuple $\const=(C_0,C_1,\dots,C_k)$ of clock constants of $\Aa$, 
with $C_0=0$, we initialise $2k+1$ copies $\Dt[0],\ldots,\Dt[2k+1]$
of the inner data structure
by calling method $\inz(J)$ for each interval $J$ among
$J_0,J_1,\dots,J_{2k+1}$.
Then, for each initial state $q$, we apply method $\ins(q,0)$ on $\Dt[0]$. 
As $\Kt_0=\{(q,0)\ \colon\ q\in I\}$, after this we have that 
Invariant~(I\ref{i:partition}) holds for $n=0$.

\subparagraph{Query.} We query all the data structures $\Dt[0],\ldots,\Dt[2k+1]$ for the existence of accepting configurations using the $\acc()$ method, and return the disjuction of the answers.
The correctness follows directly from Invariant~(I\ref{i:partition}).

\subparagraph{Update by a time span.} Suppose the next symbol read from the stream 
is a time span $r\in \realplus$. We update the outer data structure as follows.
First, we apply method $\updTime(r)$ to each data structure $\Dt[i]$. 
This operation increments the clock values of all configurations stored in $\Dt[i]$ by $r$, 
but may output a set of configurations whose clock values ceased to fit in the interval $J_i$. 
Recall that this set is organised as a 
doubly linked list of configurations, sorted by 
non-decreasing clock values; call this list $S_i$. 
Now, we need to insert each configuration $(q,t)$ that appears on those lists 
into the appropriate data structure $\Dt[j]$, where $j$ is such that $t\in J_j$. 
However, we have to be careful about the order of insertions:
we process the lists $S_{2k+1},S_{2k},\ldots,S_0$ in this precise order, 
and each list $S_i$ is processed from the end, that is, following the non-increasing 
order of clock values.
When processing a configuration $(q,t)$ from the list $S_i$, we find the index $j>i$ such that $t\in J_j$ and apply the method $\ins(q,t)$ on the structure $\Dt[j]$.
In this way the condition required by the $\ins$ method --- that $t\leq t'$ 
for every configuration $(q',t')$ currently stored in $\Dt[j]$ --- is satisfied.
It is also easy to see that Invariant~(I\ref{i:partition}) is preserved after the update.

\subparagraph{Update by a letter.} Suppose the next symbol read from the stream is 
a letter $a\in \Sigma$. We update the outer data structure as follows.
First, we apply method $\updLett(a)$ to each data structure $\Dt[i]$. 
This operation applies all possible transitions on letter $a$ to all configurations stored in $\Dt[i]$, 
and outputs a list of configurations $R_i$ where the clock got reset. 
Note that all these configurations have clock value $0$, hence the length of the
list $R_i$ is at most $|Q|$, which is a constant. It now suffices to insert all the configurations $(q,0)$ appearing on all the lists $R_i$ to the data structure $\Dt[0]$ using method $\ins(q,0)$.
We may do this in any order, as the condition required by the $\ins$ method is trivially satisfied.
Again, Invariant~(I\ref{i:partition}) is clearly preserved after the update.

\smallskip

This concludes the implementation of the outer data structure. While the correctness is clear from the description, we are left with arguing that the time complexity is as promised.

Since $|Q|$ and $k$ are considered constants, it can be easily argued using Lemma~\ref{lem:inner} that under the implementation presented above, each of the following operations takes constant time: 
initialisation, a query about the acceptance, and an update by a letter.
As for an update by a time span $r\in \realplus$, 
by Lemma~\ref{lem:inner} the complexity of such an update is $\Oh(\sum_{i=0}^{2k+1} |S_i|)$, where $S_0,\ldots,S_{2k+1}$ are the sets returned by the applications of method $\updTime(r)$ to data structures $\Dt[0],\ldots,\Dt[2k+1]$,
respectively.
We need to argue that the amortised time complexity of all these updates is constant.

Consider the following definition: a clock value $t\in \real_{\geq 0}$ is {\em{active}} at step $n$ if $\Kt_n$ contains a configuration with clock value $t$.
Observe that upon an update by a time span $r\in \realplus$, the set of active clock values simply gets shifted by $r$, while upon an update by a letter $a\in \Sigma$ 
it stays the same, except that possibly clock value $0$ becomes active in addition. Since at step~$0$ the only active clock value is $0$, we conclude that for every $n\in \bbN$, 
at most $n+1$ active clock values may have appeared until step $n$.
Now observe that since $|Q|$ is considered a constant and there may be at most $|Q|$ different active configurations with the same active clock value, 
the complexity of each update by a time span is proportional to the number of active clock values that change the interval $J_i$ to which they belong, 
where we imagine that each active clock value is shifted by the time span.
As every active clock value can change its interval at most $2k+1=\Oh(1)$ times, and the total number of active values that appear until step $n$ is at most $n+1$,
we conclude that the total time spent on updates by time spans throughout the first $n$ steps is $\Oh(n)$. This means that the amortised time complexity is $\Oh(1)$.

Finally, note that in the case of discrete streams each set $S_i$ consists of configurations with the same clock value, hence $|S_i|\leq |Q|=\Oh(1)$ for all $i\in \{0,\ldots,2k+1\}$.
Consequently, in this case the complexity of an update by a time span is also constant, without any amortisation.


This finishes the proof of Theorem~\ref{th:one_clock}, assuming Lemma~\ref{lem:inner}. We prove the latter next.


\subsection{Inner data structure}\label{sec:inner}

We now describe in detail
the inner data structure $\Dt[J]$
and 
prove Lemma~\ref{lem:inner}.
Let us fix an interval $J\in \{J_0,\ldots,J_{2k+1}\}$.
We will denote by $L$ the set of configurations currently stored by the inner data structure $\Dt[J]$.
It is convenient to represent $\Lt$ by a function $\ft\colon \real_{\geq 0}\to 2^Q$ 
defined by
\<
  \ft(t)=\{\,q\in Q\ \colon\ (q,t)\in \Lt\}.
\>
We let $\wLt$ be the set of all clock values that are {\em{active}} in $\Lt$, that is, $\wLt$ comprises all $t\in \real_{\geq 0}$ such that $\ft(t)\neq \emptyset$.
Recall that we assume that $\wLt\subseteq J$.

\paragraph*{Description of the structure.}
In short, the data structure $\Dt[J]$ consists of three elements:
\begin{itemize}
 \item The {\em{clock}}, denoted $\clk$, is a non-negative real that stores the total time elapsed so far.
 \item The {\em{list}}, denoted $\lst$, stores the set of active clock values $\wLt$.
 \item The {\em{forest}}, denoted $\frs$, is built on top of the elements of $\lst$ and describes the function $\ft$.
\end{itemize}

We describe the list and the forest in more details (the reader can refer to Figure \ref{fig:datastructure}).

\subparagraph{The list.} The list $\lst$ encodes the clock values present in $\wLt$, 
sorted in the increasing order and organised into a doubly linked list.
Each node $\lnode$ on $\lst$ is a record consisting of:
\begin{itemize}
 \item $\elnext(\lnode)$: a pointer to the next node on the list;
 \item $\elprev(\lnode)$: a pointer to the previous node on the list; and
 \item $\eltime(\lnode)\in\bbR$: the {\em{timestamp}} of the node.
\end{itemize}
As usual, the data structure stores $\lst$ by maintaining pointers to the first and last node.

The clock value represented by a node $\lnode$ on $\lst$ 
(represented by a square in Figure \ref{fig:datastructure}) 
is equal to $\aclock(\lnode) = \clk - \eltime(\lnode)$;
this will always be a non-negative real.
Thus, intuitively, the timestamp is essentially the total elapsed time recorded 
since the last reset of the clock.
Note that this implementation allows for a simultaneous increment of $\aclock(\lnode)$ for all nodes $\lnode$ on $\lst$ in constant time: it suffices to simply increment $\clk$.

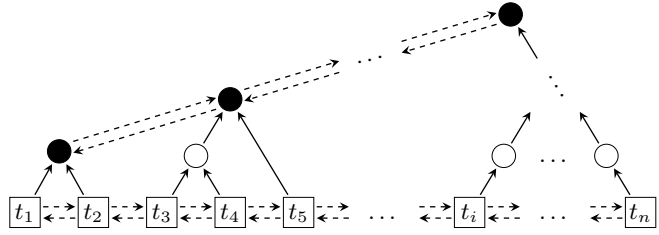
\begin{figure}
\vspace{-3mm}
\centering
\begin{tikzpicture}[xscale=0.9,yscale=0.75]
\clip (-0.3,-0.4) rectangle (9.3,3.9);

\draw (0,0) node [cell] (t1) {\small $t_1$};
\draw (1,0) node [cell] (t2) {\small $t_2$};
\draw (2,0) node [cell] (t3) {\small $t_3$};
\draw (3,0) node [cell] (t4) {\small $t_4$};
\draw (4,0) node [cell] (t5) {\small $t_5$};
\draw (5,0) node [cellh] (t6) {};
\draw (5.25,0) node [cellh] {\small $\dots$};
\draw (5.5,0) node [cellh] (t7) {};
\draw (6.5,0) node [cell] (t8) {\small $t_i$};
\draw (7.5,0) node [cellh] (t9) {};
\draw (7.75,0) node [cellh] {\small $\dots$};
\draw (8,0) node [cellh] (t10) {};
\draw (9,0) node [cell] (t11) {\small $t_n$};

\draw [transform canvas={yshift=0.75mm}] (t1) edge [dotted, arrow] (t2);
\draw [transform canvas={yshift=-0.75mm}] (t2) edge [dotted, arrow] (t1);
\draw [transform canvas={yshift=0.75mm}] (t2) edge [dotted, arrow] (t3);
\draw [transform canvas={yshift=-0.75mm}] (t3) edge [dotted, arrow] (t2);
\draw [transform canvas={yshift=0.75mm}] (t3) edge [dotted, arrow] (t4);
\draw [transform canvas={yshift=-0.75mm}] (t4) edge [dotted, arrow] (t3);
\draw [transform canvas={yshift=0.75mm}] (t4) edge [dotted, arrow] (t5);
\draw [transform canvas={yshift=-0.75mm}] (t5) edge [dotted, arrow] (t4);
\draw [transform canvas={yshift=0.75mm}] (t5) edge [dotted, arrow] (t6);
\draw [transform canvas={yshift=-0.75mm}] (t6) edge [dotted, arrow] (t5);
\draw [transform canvas={yshift=0.75mm}] (t7) edge [dotted, arrow] (t8);
\draw [transform canvas={yshift=-0.75mm}] (t8) edge [dotted, arrow] (t7);
\draw [transform canvas={yshift=0.75mm}] (t8) edge [dotted, arrow] (t9);
\draw [transform canvas={yshift=-0.75mm}] (t9) edge [dotted, arrow] (t8);
\draw [transform canvas={yshift=0.75mm}] (t10) edge [dotted, arrow] (t11);
\draw [transform canvas={yshift=-0.75mm}] (t11) edge [dotted, arrow] (t10);

\draw (0.5,1.08) node [circlef] (r1) {};
  \draw (2.5,1) node [circle] (f2) {};
\draw (3,2) node [circlef] (r2) {};
  \draw (4.8,2.65) node [circleh] (r3) {};
\draw (5.05,2.8) node [circleh, rotate=14] {\small $\dots$};
  \draw (5.3,2.85) node [circleh] (r4) {};
  \draw (7,1) node [circle] (h5) {};
  \draw (7.75,1) node [circleh] (g5) {\small $\dots$};
  \draw (8.5,1) node [circle] (f5) {};
  \draw (8,2) node [circleh] (f5') {};
  \draw (7.5,2) node [circleh] (h5') {};
  \draw (7.85,2.25) node [circleh, rotate=-58] {\small $\dots$};
  \draw (7.6,2.5) node [circleh] (f5'') {};
\draw (7.1,3.5) node [circlef] (r5) {};

\draw (t1) edge [arrow] (r1);
\draw (t2) edge [arrow] (r1);
\draw (t3) edge [arrow] (f2);
\draw (t4) edge [arrow] (f2);
\draw (f2) edge [arrow] (r2);
\draw (t5) edge [arrow] (r2);
\draw (t8) edge [arrow] (h5);
\draw (t11) edge [arrow] (f5);
\draw (f5) edge [arrow] (f5');
\draw (h5) edge [arrow] (h5');
\draw (f5'') edge [arrow] (r5);

\draw [transform canvas={yshift=0.75mm}] (r1) edge [dotted, arrow] (r2);
\draw [transform canvas={yshift=-0.75mm}] (r2) edge [dotted, arrow] (r1);
\draw [transform canvas={yshift=0.75mm}] (r2) edge [dotted, arrow] (r3);
\draw [transform canvas={yshift=-0.75mm}] (r3) edge [dotted, arrow] (r2);
\draw [transform canvas={yshift=0.75mm}] (r4) edge [dotted, arrow] (r5);
\draw [transform canvas={yshift=-0.75mm}] (r5) edge [dotted, arrow] (r4);
\end{tikzpicture}

\caption{The inner data structure.}\label{fig:datastructure}
\vspace{-3mm}
\end{figure}

\subparagraph{The forest.} The forest $\frs$ represents the assignment 
that maps elements $t\in \wLt$, encoded in $\lst$, to respective sets of control states $\ft(t)$.
It has a standard form of a rooted forest, where every node may have arbitrarily many children, 
and these children are unordered.
Every node $\fnode$ of $\frs$ (represented by a circle in Figure \ref{fig:datastructure}) 
is a record containing:
\begin{itemize}
 \item $\parent(\fnode)$: a pointer to the parent of $\fnode$; and
 \item $\elchildren(\fnode)$: an integer equal to the number of children of $\fnode$.
\end{itemize}
The leaves of the forest will always coincide with the nodes on the list $\lst$.
Thus, we may simply augment the records stored for the nodes on $\lst$ by adding 
the $\parent(\cdot)$ pointer, and treat them as nodes of the forest $\frs$ at the same time.
The counter $\elchildren(\cdot)$ would always be equal to $0$ for those nodes, so we may omit it.

The {\em{roots}} of the forest (represented by the black circles in Figure \ref{fig:datastructure}) 
are the nodes $\froot$ with no parent, i.e.~$\parent(\froot)=\bot$.
We will maintain the invariant that no root is a leaf in $\frs$, that is, every root has at least one child.
In the data structure we will maintain a doubly linked list containing all the roots of $\frs$.
This list will be denoted 
$\lstr$, and again it will be stored by pointers to its first and last element.
Thus, the records of the roots of $\frs$ are augmented by $\elnext(\cdot)$ and $\elprev(\cdot)$ pointers describing the structure of $\lstr$, with the usual meaning.
In addition to this, every root $\froot$ of $\frs$ carries two additional values:
\begin{itemize}
 \item $\elstates(\froot)\subseteq Q$: a non-empty subset of control states for which $\froot$ is responsible; and
 \item $\elrank(\froot)$: an integer from the set $\{1,2,\ldots,2^{|Q|}\}$.
\end{itemize}
We will maintain two invariants about these values.
First, the sets $\elstates(\froot)$ and the ranks $\elrank(\froot)$ should be pairwise different for distinct roots $\froot$ of $\frs$.
Note that this means that $\frs$ always has at most $2^{|Q|}-1=\Oh(1)$ roots.
Second, for every root $\froot$, the {\em{tree rooted at $\froot$}} --- which is the tree consisting of $\froot$ and all its descendants in $\frs$ --- has depth at most $\elrank(\froot)$.
Here, the {\em{depth}} of a forest is the maximum number of 
edges
on a path from a leaf to a root, minus~$1$.
Note that this implies that the depth of the forest $\frs$ is bounded by $2^{|Q|}=\Oh(1)$.

The function $\ft$ is then represented as follows. For every node $\lnode$ on $\lst$, let $\rt(\lnode)$ be the root of the tree of $\frs$ that contains $\lnode$.
Then denoting $t=\aclock(\lnode)$, we have $\ft(t)=\elstates(\rt(\lnode))$.
Note that the invariant stated above imply that from every leaf $\lnode$ of $\frs$, $\rt(\lnode)$ can be computed from $\lnode$ by following the $\parent(\cdot)$ pointer at most $2^{|Q|}=\Oh(1)$ times.
Hence, given $t\in \wLt$ together with a node $\lnode$ on $\lst$ satisfying $t=\aclock(\lnode)$, we can compute $\ft(t)$ in $\Oh(1)$ time.

\paragraph*{Invariants.} For convenience, we gather all the invariants 
maintained by the inner data structure which we mentioned before:
\begin{enumerate}[{I}1.]\setcounter{enumi}{1}
 \item\label{j:list-elts} For each node $\lnode$ on $\lst$, the value $\aclock(\lnode) = \mathtt{y} - \eltime(\lnode)$ belongs to $J$.
 \item\label{j:list-sort} The nodes on $\lst$ are sorted by increasing clock values, 
 or equally by
 decreasing
timestamps. 
That is, 
$\eltime(\lnode)>\eltime(\elnext(\lnode))$ 
for every non-last node $\lnode$ on $\lst$.
 \item\label{j:non-leaves} Every root of $\frs$ has at least one child, and the leaves of $\frs$ are exactly all the nodes on $\lst$.
 \item\label{j:roots} The roots of $\frs$ carry pairwise different, non-empty sets of control states, and they have pairwise different ranks. 
                      Moreover, all the ranks belong to the set $\{1,2,\ldots,2^{|Q|}\}$.
 \item\label{j:depth} For every root $\froot$ of $\frs$, the depth of the tree rooted at $\froot$ is at most $\rank(\froot)$.
\end{enumerate}

\paragraph*{Implementation.} We now show how to implement the methods
$\inz(J)$, 
$\acc()$, $\ins(q,t)$, $\updTime(r)$, and $\updLett(a)$ in the data structure.
Recall that all these methods should work in constant time, with the exception of $\updTime(r)$ which is allowed to work in time linear in its output.
The description of each method is supplied by a running time analysis and an argumentation of the correctness, which includes a discussion on why the invariants stated above are maintained.

\subparagraph{Removing nodes.} Before we proceed to the description of the required methods, we briefly discuss an auxiliary procedure of removing a node from the list $\lst$ and from the forest $\frs$,
as this procedure will be used several times.
Suppose we are given a node $\lnode$ on the list $\lst$ and we would like to remove it, which corresponds to removing from $\Lt$ all configurations $(q,t)$ where $t=\aclock(\lnode)$ and $q\in \ft(t)$.
We can remove $\lnode$ from $\lst$ in the usual way. Then we remove $\lnode$ from $\frs$ as follows. First, we decrement the counter of children in the parent of $\lnode$. 
If this counter stays positive then there is nothing more to do. 
Otherwise, we need to remove the parent of $\lnode$ as well, and accordingly decrement the counter of children in the grandparent of $\lnode$.
This can again trigger removal of the grandparent and so on. If eventually we need to remove a root of $\frs$, we also remove it from the list $\lstr$ in the usual way. 
Note that since by Invariants~(I\ref{j:roots}) and~(I\ref{j:depth}), the depth of $\frs$ is bounded by a constant, the total number of removals is bounded by a constant as well, 
and the whole procedure can be performed in constant time. It is straightforward to verify that all the invariants are maintained.

\subparagraph{Initialization.} The $\inz(J)$ method 
stores the interval $J$, that defines the range of clock values that could be
represented in the data structure. It also sets $\clk=0$ and initialises 
$\lst$ and $\lstr$ as empty lists. The correctness and the running time are clear. 

\subparagraph{Acceptance query.} The $\acc()$ method is implemented as follows. 
We iterate through the list $\lstr$ to check whether there exists a root $\froot$ of $\frs$ such that $\elstates(\frs)$ contains any accepting state, say $q$.
If this is the case, then by Invariant~(I\ref{j:non-leaves}) there is a node $\lnode$ on $\lst$ satisfying $\rt(\lnode)=\froot$, hence $(q,t)$ is an accepting configuration that belongs to~$\Lt$, 
where $t=\aclock(\lnode)$. So we may return a positive answer from the query. Otherwise, all configurations in $\Lt$ have non-accepting states, and we may return a negative answer.
Note that since by Invariant~(I\ref{j:roots}) the list $\lstr$ has bounded length, the above procedure works in constant time.

\subparagraph{Insertion.} We now implement the method $\ins(q,t)$, where $(q,t)$ is a configuration. Recall that when this method is executed, we have a promise that $t\in J$ and $t\leq t'$ for all 
configurations $(q',t')$ that are currently present in $\Dt[J]$.

Let $\lnode$ be the first node on the list $\lst$ and let $t'=\aclock(\lnode)$. 
By the promise, we have $t\leq t'$. We distinguish two cases: either $t<t'$ or $t=t'$. The former case also encompasses the corner situation when $\lst$ is empty.

When $t<t'$ or $\lst$ is empty, the new configuration $(q,t)$ gives rise to a new active clock value $t$. Therefore, we create a new list node $\lnode_0$ and insert it at the front of the list $\lst$.
We set the timestamp as $\eltime(\lnode_0)=\clk-t$, so that the node correctly represents the clock value $t$. 
It is clear that Invariants~(I\ref{j:list-elts}) and~(I\ref{j:list-sort}) are thus satisfied.

Next, we need to insert the new node $\lnode_0$ to the forest $\frs$. We iterate through the list $\lstr$ in search for a root $\froot$ that satisfies $\elstates(\froot)=\{q\}$.
In case there is one, we simply set $\parent(\lnode_0)=\froot$ and increment $\elchildren(\froot)$. 
Otherwise, we construct a new root $\froot_0$ with $\elstates(\froot_0)=\{q\}$ and $\elchildren(\froot_0)=1$, insert it at the front of the list $\lstr$, and set $\parent(\lnode_0)=\froot_0$.
To determine the rank of $\froot_0$, we find the smallest integer $k\in \{1,\ldots,2^{|Q|}\}$ that is {\em{not}} used as the rank of any other root of $\frs$. Observe that, by Invariant~(I\ref{j:roots}), the forest $\frs$ 
has at most $2^{|Q|}-1$ roots, so there is always such a number $k$, and it can be found in constant time by inspecting the list $\lstr$. We then set $\elrank(\froot_0)=k$.
It is clear from the description that this operation can be performed in constant time, and that Invariants~(I\ref{j:non-leaves}),~(I\ref{j:roots}), and~(I\ref{j:depth}) are maintained. 
For the last one, observe that the new leaf $\lnode_0$ is attached directly under a root of $\frs$, so no tree in $\frs$ existing before the insertion could have increased its depth.

We are left with the case when $t=t'$. We first compute the set $X$ equal to $\ft(t)$ before the insertion: it suffices to find $\rt(\lnode)$ in constant time 
and read $X=\elstates(\rt(\lnode))$.
If $q\in X$ then the configuration $(q,t)$ is already present in $\Lt$, so there is nothing to do. Otherwise, we need to update the data structure so that $\ft(t)$ is equal to $X\cup \{q\}$ instead of $X$. 
Consequently, we remove the node $\lnode$ from $\lst$ and from $\frs$, 
using the operation described earlier,
and we insert a new node $\lnode'$ at the front of $\lst$,
with the same timestamp as that of $\lnode$. Thus, $\aclock(\lnode')=t$. We next insert the new node $\lnode'$ to the forest $\frs$ using the same procedure as described in the previous paragraph, 
but applied to the state set $X\cup \{q\}$ instead of $\{q\}$. Again, it is clear that these operations can be performed in constant time, and the same argumentation shows that all the invariants are maintained.

\subparagraph{Update by a time span.} Next, we implement the method $\updTime(r)$, where $r\in \realplus$.

First, we increment $\clk$ by $r$. Thus, for every node $\lnode$ in the list $\lst$ the value $\aclock(\lnode)$ got incremented by $r$.
However, the Invariant~(I\ref{j:list-elts}) may have ceased to hold, for some active clock values could have been shifted outside of the interval $J$.
The configurations with these clock values should be removed from the data structure and their list should be returned as the output of the method.

We extract these configurations as follows. Construct an initially empty list of configuration $\lret$, on which we shall build the output.
Iterate through the list $\lst$, starting from its back. For each consecutive node $\lnode$, compute $t=\aclock(\lnode)$. 
If $t\in J$, then break the iteration and return $\lret$, as there are no more configurations to remove. 
Otherwise, find $\rt(\lnode)$ in constant time, read $\ft(t)=\elstates(\rt(\lnode))$, and add at the front of $\lret$ all configurations $(q,t)$ for $q\in \ft(t)$, in any order.
Then remove $\lnode$ from the list $\lst$ and from the forest $\frs$, and proceed to the previous node in $\lst$ (if there is none, finish the iteration).

By Invariant~(I\ref{j:list-sort}), it is clear that in this way we remove from $\Dt[J]$ exactly all the configurations whose clock values got shifted outside of $J$, hence Invariants~(I\ref{j:list-elts}) and~(I\ref{j:list-sort})
are maintained. As the forest structure was influenced only by removals, Invariants~(I\ref{j:non-leaves}),~(I\ref{j:roots}), and~(I\ref{j:depth}) are maintained as well.
Also note that the configurations on the output list $\lret$ are ordered by non-decreasing clock values, as was required.

As for the time complexity, recall that by Invariant~(I\ref{j:roots}), for every removed node $\lnode$ the set $\elstates(\rt(\lnode))$ is non-empty and has size at most $|Q|$.
Hence, with every removed node $\lnode$ we add to $\lret$ between $1$ and $|Q|=\Oh(1)$ new configurations. As the time complexity of the procedure
is bounded linearly in the number of nodes that we remove from $\lst$, it is also
bounded linearly in the number of configurations that appear in the output list $\lret$.

\subparagraph{Update by a letter.} We proceed to the method $\updLett(a)$, where $a\in \Sigma$.
As argued before, every clock condition appearing in $\Aa$ is either true for all clock values in $J$, 
or false for all clock values in $J$.
For every subset of states $X\subseteq Q$, let $\Phi(X)$ be the set of all states $q$ such that there is a transition $(p,a,q,\gamma,\emptyset)$ in $E$ for some $p\in X$ and clock condition $\gamma$ that is true in $J$.
In other words, $\Phi(X)$ comprises states reachable from the states of $X$ by non-resetting transitions over $a$ that are available for clock values in $J$.
We define $\Psi(X)$ in a similar way, but for resetting transitions over $a$ 
that are available for clock values in $J$.

First, we compute the output of the method, which should be simply $\{(q,0)\ \colon\ q\in \Psi(X)\}$, 
where $X$ is the set of all states appearing in the configurations of $\Lt$.
Observe that, by Invariant~(I\ref{j:non-leaves}), $X$ can be computed in constant time by iterating through the list $\lstr$ and computing the union of sets $\elstates(\froot)$ for roots $\froot$ appearing on it.
Thus, the output of the method can be computed in constant time.

Second, we need to update the values of function $\ft$ by applying all possible non-resetting transitions over $a$.
This can be done by iterating through the list $\lstr$ and, for each root $\froot$ appearing on it, substituting $\elstates(\froot)$ with $\Phi(\elstates(\froot))$.
Note that since we assumed that for every state $q$, some transition over $a$ is always available at $q$, it follows that $\Phi$ maps non-empty sets of states to non-empty sets of states. 
Hence, after this substitution the roots of $\frs$ will still be assigned non-empty sets of states.
However, Invariant~(I\ref{j:roots}) may cease to hold, as some roots may now be assigned the same set of states.

We fix this as follows. For every root $\froot$ of $\frs$, inspect the list $\lstr$ and find the root $\froot'$ that has the largest rank among those satisfying $\elstates(\froot)=\elstates(\froot')$.
If $\froot=\froot'$, then do nothing. Otherwise, turn $\froot$ into a non-root node of $\frs$, remove it from the list $\lstr$, set $\parent(\froot)=\froot'$, and increment $\elchildren(\froot')$ by one.
Note that after applying this modification, the function $\ft$ stored in the data structure stays the same, while Invariant~(I\ref{j:roots}) becomes satisfied.

As for the other invariants, the satisfaction of Invariants~(I\ref{j:list-elts}),~(I\ref{j:list-sort}), and~(I\ref{j:non-leaves}) after the update is clear.
However, we need to be careful about Invariant~(I\ref{j:depth}), as we might have substantially modified the structure of the forest $\frs$.
Observe that each modification of $\frs$ that we applied boils down to attaching a tree with a root of some rank $i$ as a child of a tree with a root of some rank $j>i$.
By Invariant~(I\ref{j:depth}), the former tree has depth at most $i$, which is 
bounded from above by $j-1$. Thus, after the attachment, the depth of the latter tree cannot become larger than $j$.
We conclude that Invariant~(I\ref{j:depth}) is maintained as well.

Finally, note that since the number of roots of $\frs$ is always bounded by a constant, all the operations described above can be performed in constant time.

\smallskip

We have implemented all the required methods within the claimed running time bounds. This concludes the proofs of Lemma~\ref{lem:inner} and of Theorem~\ref{th:one_clock}.


\section{Conclusion}
\label{sec:conclusion}
We have considered a monitoring problem for streams processed 
by a timed automaton, which consists of readily answering
the membership query ``\emph{Does the time automaton accept
the current prefix of the stream?}''.
We have designed a suitable data structure that solves
the monitoring problem for a one-clock timed automaton
in amortised constant time
assuming that the timed automaton is
fixed.
%
 
We leave as an open question whether our complexity
result can be strengthened, for instance, by devising
amortised constant time monitoring algorithms 
for timed automata with multiple clocks.
Concerning the latter question, we proved that,
assuming the {\sc{3SUM}} Conjecture,
this is not possible for timed automata enhanced 
with additive clock constraints.

\bibliography{biblio}

\newpage
\appendix

\section{Other application examples for the monitoring problem}
\label{appendix:examples}
\begin{example}\label{ex:within}
Here we consider a scenario from \emph{complex event processing} (CEP), with a specification
language called CEL and defined by the following grammar \cite{grez2019formal}:
\[
  \varphi ~:=~ a ~\mid~ \varphi ; \varphi ~\mid~ \varphi \within t
\]
where $a \in \Sigma$ and $t \in \bbN$.
A word $w=a_1 a_2 \ldots a_n \in \Sigma^*$ \emph{matches} an expression $\varphi$ from the above grammar, 
denoted $w \vDash \varphi$, if one of the following cases holds:
\begin{itemize}
	\item $\varphi = a_n$,
	\item $\varphi = \varphi_1 ; \varphi_2$, $w = w_1 \cdot w_2$, 
	      $w_1 \vDash \varphi_1$ and $w_2 \vDash \varphi_2$,
	\item $\varphi = \varphi' \within t$ and $a_m\ldots a_n \vDash \varphi'$, where $m = \max\{1,n-t\}$. 
\end{itemize} 

Given a word $w=a_1 a_2 \dots$ and an expression $\varphi$, we would like to read $w$ sequentially,
as in a stream, and decide, at each position $n=1,2,\dots$, whether the prefix 
$w_n = a_1 \dots a_n$ matches a fixed expression $\varphi$.
One can reduce this latter problem to our monitoring problem for timed automata, 
by using a discrete timed word $\ww = 1 a_1 1 a_2 1 \dots$ as before and by 
translating the expression $\varphi$ into an appropriate timed automaton.
We omit the straightforward details of the translation of a CEL expression to an equivalent timed automaton,
and we only remark that every occurrence of the $\within$ operator in an expression corresponds
to a condition on a specific clock in the equivalent timed automaton. This means that, in general,
the translation may require a timed automaton with multiple clocks. However, there are simple cases
(which we do not characterise here) where,
even in the presence of nested $\within$ operators, one can construct an equivalent timed automaton
with a single clock.
For example, 
consider the expression $\varphi = ((a ; b) \within 4); c \within 10$, which describes
a sequence containing three (possibly not contiguous) events $a,b,c$, with $a$ and $b$ at distance 
at most $4$ and $a$ and $c$ at distance at most $10$.
Figure~\ref{fig:within} shows a single-clock timed automaton that is equivalent to $\varphi$,
in the sense that it accepts a timed word of the form $1a_11a_21\ldots1a_n$ 
if and only if $a_1a_2 \ldots a_n \vDash \varphi$.
In this case one can validate any input stream against the expression $\varphi$ 
in time that is constant per input letter, by simply reducing to our monitoring 
problem for single-clock timed automata and discrete timed words.
\end{example}

\begin{figure}
\centering
\begin{tikzpicture}[scale=0.9]
\begin{scope}
  \draw (0,1.5) node [right] {$\varphi = ((a ; b) \within 4); c \within 10$};
  
  \draw [dotted] (0,0) -- (6,0);

  \draw (1,0) -- (1,0.2);
  \draw (3,0) -- (3,0.2);
  \draw (5,0) -- (5,0.2);
  
  \draw (1,0.5) node [baseline] (a) {\small $a$};
  \draw (3,0.5) node [baseline] (b) {\small $b$};
  \draw (5,0.5) node [baseline] (c) {\small $c$};
  
  \draw (1,-0.2) -- (1,-0.4) -- node [below=-0.05] {\small $\le 4$} (3,-0.4) -- (3,-0.2);
  \draw (1,-0.6) -- (1,-0.8) -- node [below=-0.05] {\small $\le 10$} (5,-0.8) -- (5,-0.2);
\end{scope}
\begin{scope}[xshift=9cm]
  \draw (-1.5,0.2) node {$\hat\Aa:$};
  
  \draw (0,0) node [dot] (p) {};
  \draw (2,0) node [dot] (q) {};
  \draw (4,0) node [dot] (r) {};
  \draw (6,0) node [dot] (s) {};
  
  \draw [arrow] (-0.5,0) to (p);
  \draw [arrow] (s) to (6.5,0);
  \draw [arrow] (p) to [out=45, in=135, looseness=35] node [black, above=-0.05] 
        {\small $*$} (p);
  \draw [arrow] (p) to node [black, below=-0.05] {\small $a\mid\text{reset } \mathtt{x}$} (q);
  \draw [arrow] (q) to node [black, below=-0.05] {\small $b,\mathtt{x}\le 4$} (r);
  \draw [arrow] (r) to node [black, below=-0.05] {\small $c,\mathtt{x}\le 10$} (s);
\end{scope}
\end{tikzpicture}
\caption{Translation of a CEL expression into an equivalent single-clock timed automaton.}\label{fig:within}
\end{figure}
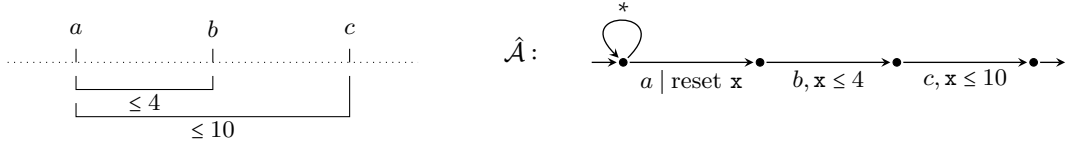

\begin{example}\label{ex:frobenius}
Consider a list $k_1,\ldots,k_n \in \nat$ of numbers whose greatest common divisor is~$1$. 
The Frobenius coin problem deals the following question: what is the maximum
integer number $h$ that cannot be expressed as a sum of multiples of $k_i$'s?
We consider a dynamic variant of this problem, where we are given $k_1,\dots,k_n\in\nat$ on input, and then
for consecutive numbers $h=1,2,3,\dots$ we should answer whether $h$ can be expressed as a sum of multiples of $k_i$'s.
The time complexity should be constant per each choice of $h$, assuming that the numbers $h=1,2,3,\dots$ 
arrive in this precise ordering.

The idea is to construct a timed automaton $\Aa$ that consumes longer and longer 
prefixes $(1a)^h$ of a potentially infinite stream, while 
checking whether the total elapsed time $h$ can be written as 
$\sum_{i=1}^n k_i\cdot h_i$ for some $h_1,\dots,h_n\in\bbN$, 
in which case the prefix is accepted. 
Formally, we let $\Aa=(\Sigma,Q,X,I,E,F)$, where 
$\Sigma = \{a\}$, $Q = \{p,q\}$, $X=\{\mathtt{x}\}$, $I = \{p\}$, $F = \{q\}$, 
and $E$ contains the following transitions:
$(p,a,\true,p,\emptyset)$, $(p,a,\varphi,q,\{\mathtt{x}\})$, and $(q,a,\true,p,\emptyset)$, 
with $\varphi=(\mathtt{x}=k_1)\vee\dots\vee(\mathtt{x}=k_n)$.
Intuitively, the first transition lets some arbitrary amount of time to pass, 
until the second transition is enabled, which happens when the clock value 
is any of the constants $k_1,\dots,k_n$. Every time the second transition is
triggered, the clock is reset and either the word terminates and is accepted,
or a subsequent transition brings the control back to the initial state.
It is easy to see that an input word of the form $(1a)^h$ is accepted by 
this timed automaton if and only if $h=\sum_{i=1}^n k_i\cdot h_i$ 
for some $h_1,\dots,h_n\in\bbN$. 
\end{example}

\section{3SUM Conjecture and lower bound for the monitoring problem}
In this section, we prove a complexity lower bound for a variant of our 
monitoring problem.
Ideally, we would like to prove that there is a timed automaton with two clocks 
for which monitoring in amortised constant time is not possible. This would 
imply that our result (Theorem \ref{th:one_clock}) 
for monitoring a single-clock timed automaton cannot be generalised to the multiple-clock setting.
We are not able to establish optimality in this sense. 
We can however prove a result along the same line, by considering 
timed automata extended with additive constraints, that is, having 
clock conditions of the form $\big(\sum_{\mathtt{x}\in Z} \mathtt{x}\big) \sim c$.
Our lower bound is based on the {\sc{3SUM}} Conjecture, which we restate below for convenience.

\SumConjecture

Recall that in the {\sc{3SUM}} problem we are given a set $S$ of positive real numbers 
and the question is to determine whether there exist $a,b,c\in S$ satisfying $a+b=c$.
We remark that the original phrasing of the conjecture allows non-positive numbers on 
input and asks for $a,b,c\in S$ such that $a+b+c=0$.
It is easy to reduce this standard formulation to our setting, for example by replacing 
$S$ with $S'=\{3M+x \:\colon\: x\in S\}\cup \{6M-x \:\colon\: x\in S\}$, where $M$ is any 
real satisfying $M>|a|$ for all $a\in S$.



The {\sc{3SUM}} Conjecture has received significant attention in the recent years, as it was realised that it can be used as a base for tight complexity lower bounds 
for a variety of discrete graph problems, including questions about efficient dynamic data structures~\cite{AbboudW14,AlmanMW17,KopelowitzPP16,Patrascu10}.
In this setting, it is common to assume the integer formulation of the conjecture: there exists $d\in \N$ such that the {\sc{3SUM}} problem where the input numbers are integers
from the range $[-n^d,n^d]$ cannot be solved in strongly sub-quadratic time, assuming the word RAM model with words of bit length $\Oh(\log n)$. 
It is straightforward to verify that the construction we are going to present in this section can be turned into an analogous lower bound assuming the integer formulation of the {\sc{3SUM}} Conjecture.
For this, we would need to amend the formulation of the monitoring problem by assuming that the input stream is expected to have total length at most $N$, 
the clock constants and the time spans in the stream are integers of bit length at most $M$, 
and the data structure solving the monitoring problem should work in the word RAM model with words of bit length $\Oh(M+\log N)$.

We now prove Theorem~\ref{th:two_clocks}, restated below for convenience, which provides a lower bound for monitoring two-clock timed automata with additive constraints under the {\sc{3SUM}} Conjecture.

\TwoClocks*

Our approach is similar in spirit to the other lower bounds on dynamic problems, which we mentioned above~\cite{AbboudW14,AlmanMW17,KopelowitzPP16,Patrascu10}.
We first prove {\sc{3SUM}}-hardness of deciding acceptance by a timed automaton with additive constraints
in the static setting.
We then show that any data structure that supports monitoring
in amortised strongly sub-linear time would violate 
the {\sc{3SUM}}-hardness of the former static acceptance problem,
thus proving Theorem \ref{th:two_clocks}.

The postulated hardness of the static problem is captured by the following lemma.

\newcommand{\step}{\diamondsuit}
\newcommand{\phase}{\spadesuit}

\begin{lemma}\label{lem:beaver}
If the {\sc{3SUM}} Conjecture holds, then 
there is a two-clock timed automaton $\Aa$ with additive constraints
for which there is no algorithm that, given a timed word $w\in \big(\Sigma \uplus \realplus\big)^*$ 
as input, where $\Sigma$ is a two-letter alphabet, 
decides whether $\Aa$ accepts $w$ in time $\Oh(n^{2-\delta})$ for any $\delta>0$.
\end{lemma}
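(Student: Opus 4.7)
We prove Lemma~\ref{lem:beaver} by giving a linear-size reduction from 3SUM to the static acceptance problem of a suitably designed fixed two-clock timed automaton $\Aa$ with additive constraints. Under Conjecture~\ref{cnj:3sum}, such a reduction implies that no algorithm can decide acceptance of $\Aa$ on a word of length $n$ in time $\Oh(n^{2-\delta})$ for any $\delta>0$.

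Given a 3SUM instance $\{s_1,\ldots,s_n\}$, rescaled so that each $s_i$ lies in a narrow interval $(0,\epsilon)$ with $\epsilon$ much smaller than a constant $M$ baked into $\Aa$, we construct a timed word $w$ of length $\Theta(n)$ over a two-letter alphabet. The word has three phases, each encoding the full list of values as a sequence of $n$ padded blocks of duration $M$, with an internal landmark letter placed exactly $s_i$ time units after the start of the $i$-th block; thus the time of the internal landmark of block $i$ in phase $\ell$ equals $(\ell-1)nM + (i-1)M + s_i$. The automaton $\Aa$ uses its two clocks $\mathtt{x},\mathtt{y}$ to nondeterministically choose one block per phase: it resets $\mathtt{x}$ at the internal landmark of the chosen block in phase 1, resets $\mathtt{y}$ at the chosen block in phase 2, and at the chosen block in phase 3 it performs the additive check $\mathtt{x}+\mathtt{y}=c_0$ (for a fixed $c_0$) before transitioning to an accepting sink.

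A direct computation gives
\[
  \mathtt{x}+\mathtt{y} \;=\; (3n+2c-a-b)\,M \;+\; (2s_c - s_a - s_b)
\]
at the check moment, where $a,b,c$ are the indices chosen nondeterministically. Because $|2s_c - s_a - s_b| \le 4\epsilon < M$, the check $\mathtt{x}+\mathtt{y}=c_0$ decomposes into two independent conditions: an integer-valued linear constraint on $a,b,c,n$ (forcing the coefficient of $M$ to be a specific integer), and a real-valued linear equation on the values $s_a,s_b,s_c$ (forcing the residual to cancel). With a careful choice of $c_0$ and of the encoding, the latter becomes exactly the sum relation $s_a+s_b=2s_c$ (a 3SUM-equivalent variant), while the former pins the indices into a ``positional AP''-style relation.

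The principal technical hurdle is reconciling the \emph{fixed} automaton constant $c_0$ with the $n$-dependent term $3nM$ appearing in the $M$-coefficient. We handle this by reducing not from plain 3SUM but from a 3SUM-hard variant --- such as positional-plus-value 3-term AP or convolution 3SUM --- whose positional structure is tailored so that the effective offset absorbed by $c_0$ becomes independent of $n$; equivalently, one may prepend $w$ with a short calibration preamble of total duration $\Theta(nM)$ ending at a landmark where $\Aa$ performs an auxiliary additive check (reusing the same constant $c_0$), which shifts the effective time origin and cancels the $3nM$ offset before the three phases begin. Either way, the final check then reduces precisely to the conjunction of the (now $n$-free) positional and value conditions expressing the 3SUM relation, and $\Aa$ accepts $w$ iff the original instance is a yes-instance. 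Since $|w|=\Oh(n)$, the 3SUM Conjecture yields the claimed lower bound.
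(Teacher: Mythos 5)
Your high-level blueprint matches the paper's: a timed word with three passes over $S$, nondeterministic resets of $\mathtt{x}$ in the first pass and $\mathtt{y}$ in the second, and a single additive equality check in the third. But your concrete encoding --- padded blocks of fixed duration $M$, with a landmark at absolute time $(\ell-1)nM + (i-1)M + s_i$ --- forces the clocks to accumulate $\Theta(nM)$ of time, so the quantity tested at the end carries the $n$-dependent term $3nM$. You flag this yourself as the principal hurdle, but neither proposed fix closes the gap. The ``calibration preamble'' idea does not work as described: an auxiliary additive test does not change the clock values, and if instead you mean to reset both clocks after a preamble of duration $\Theta(nM)$, they simply re-accumulate $\Theta(nM)$ of time across the three main phases, so the $n$-dependence survives. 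The reduction from convolution 3SUM is plausible in principle (that variant is 3SUM-hard), but you do not exhibit it, and there is a real subtlety: a single constant $c_0$ fixes the $M$-coefficient $3n+2c-a-b$ to one integer, which pins the chosen indices to a single affine relation, whereas convolution 3SUM requires acceptance over the whole family of pairs $(i,j)$ with $k=i+j$. That mismatch must be engineered away, and your sketch does not say how.

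The paper sidesteps all of this with a difference-telescoping encoding that makes the offset independent of $n$ and imposes no positional relation between the chosen indices. It sorts $S=\{s_1<\cdots<s_n\}$ and encodes each pass as the sequence of \emph{consecutive differences}: $u = 2(M-s_n)\,\step\,2(s_n-s_{n-1})\,\step\,\cdots\,\step\,2(s_1)$, whose total duration is exactly $2M$ regardless of $n$. If $\mathtt{x}$ is reset at the $i$-th $\step$ of the first $u$, the suffix telescopes to $2s_i$, so at the first $\phase$ we have $\mathtt{x}=2a$ with $a=s_i$ --- the clock value records the chosen number itself, not its position or the elapsed wall-clock time. After the second $u$ (duration again $2M$, with $\mathtt{y}$ reset inside it) and the third block $v$ of undoubled differences, the check point satisfies $\mathtt{x}+\mathtt{y} = 2a+2b+4M-2c$, so the single fixed constraint $\mathtt{x}+\mathtt{y}=4M$ is equivalent to $a+b=c$ for arbitrary $a,b,c\in S$. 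This telescoping of consecutive differences, so that the check constant is a fixed multiple of $M$ and no positional coupling appears, is the key idea missing from your construction.
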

%
%
\begin{proof}
We construct a two-clock timed automaton $\Aa$ with additive constraints and 
an algorithm that given a set $S$ of $n$ positive reals, 
outputs a word $w\in \big(\Sigma \uplus \realplus\big)^*$ such that 
$w$ is accepted by $\Aa$ if and only if there are $a,b,c\in S$ satisfying $a+b=c$.
We find it more convenient to first present the construction of $w$ from $S$. 
Then we present the automaton $\Aa$ and analyse its runs on $w$.
 
Let $M=1+\max_{s\in S} |s|$.
By sorting $S$ we may assume that $S=\{s_1,s_2,\ldots,s_n\}$, where $0<s_1<\ldots<s_n<M$. 
We set $\Sigma=\{\step,\phase\}$. 
The word is defined as 
\<
  w = u\ \phase\ u\ \phase\ v,
\> 
where
\begin{eqnarray*}
  u & = &           2(M-s_n)\ \step\           2(s_n-s_{n-1})\ \step\           2(s_{n-1}-s_{n-2})\ \step\ \ldots\ \step\           2(s_2-s_1)\ \step\ 2(s_1-0);\\
  v & = & \phantom{2}(M-s_n)\ \step\ \phantom{2}(s_n-s_{n-1})\ \step\ \phantom{2}(s_{n-1}-s_{n-2})\ \step\ \ldots\ \step\ \phantom{2}(s_2-s_1)\ \step.
\end{eqnarray*}
Note that $w$ can be constructed from $S$ in time $\Oh(n\log n)$.
Intuitively, the factors $u$, $u$, and $v$ above are responsible for 
the choice of $a$, $b$, and $c$, respectively. 
We now describe a timed automaton $\Aa$ that accepts $w$ if and 
only if $a+b=c$. 

The automaton $\Aa$ is depicted in Figure~\ref{fig:3sum}. 
It uses two clocks, named $\mathtt{x}$ and $\mathtt{y}$.
Note that all the transitions have trivial (always true) clock conditions,
apart from the transition from $r_1$ to $r_2$, where we check that the sum 
of clock values is equal to $4M$.
The only initial state is $p_1$, the only accepting state is $r_2$.
 
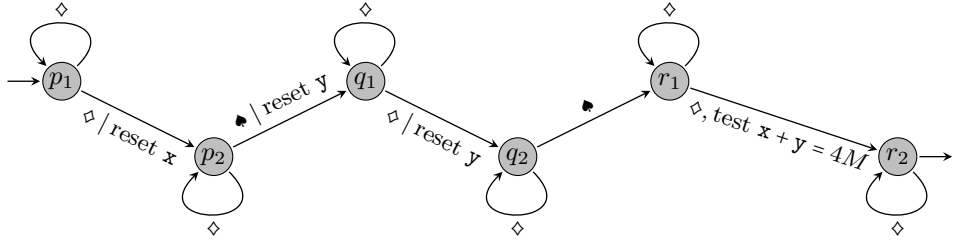
\begin{figure}
\centering
\begin{tikzpicture}
  \draw (0,0) node [biggray] (p1) {$p_1$};
  \draw (2,-1) node [biggray] (p2) {$p_2$};
  \draw (4,0) node [biggray] (q1) {$q_1$};
  \draw (6,-1) node [biggray] (q2) {$q_2$};
  \draw (8,0) node [biggray] (r1) {$r_1$};
  \draw (11,-1) node [biggray] (r2) {$r_2$};

  \draw [arrow] (-0.75,0) to (p1);
  \draw [arrow] (r2) to (11.75,-1);
  \draw [arrow] (p1) to [out=45, in=135, looseness=8] node [black, above=-0.05] 
        {\small $\step$} (p1);
  \draw [arrow] (p2) to [out=-45, in=-135, looseness=8] node [black, below=-0.05] 
        {\small $\step$} (p2);
  \draw [arrow] (q1) to [out=45, in=135, looseness=8] node [black, above=-0.05] 
        {\small $\step$} (q1);
  \draw [arrow] (q2) to [out=-45, in=-135, looseness=8] node [black, below=-0.05] 
        {\small $\step$} (q2);
  \draw [arrow] (r1) to [out=45, in=135, looseness=8] node [black, above=-0.05] 
        {\small $\step$} (r1);
  \draw [arrow] (r2) to [out=-45, in=-135, looseness=8] node [black, below=-0.05] 
        {\small $\step$} (r2);
  \draw [arrow] (p1) to node [black, sloped, below=-0.05] {\small $\step\mid\text{reset } \mathtt{x}$} (p2);
  \draw [arrow] (p2) to node [black, sloped, above=-0.05] {\small $\phase\mid\text{reset } \mathtt{y}$} (q1);
  \draw [arrow] (q1) to node [black, sloped, below=-0.05] {\small $\step\mid\text{reset } \mathtt{y}$} (q2);
  \draw [arrow] (q2) to node [black, sloped, above=-0.05] {\small $\phase$} (r1);
  \draw [arrow] (r1) to node [black, sloped, below=-0.05] {\small $\step,\text{test } \mathtt{x}+\mathtt{y}=4M$} (r2);
\end{tikzpicture}
\caption{Timed automaton for reducing {\sc 3SUM}.}\label{fig:3sum}
\end{figure}

We now analyse the runs of $\Aa$ on $w$, with the goal of showing that 
$\Aa$ accepts $w$ if and only if there are $a,b,c\in S$ such that $a+b=c$.
Consider any successful run $\rho$ of $\Aa$ on $w$.
Observe that the moment of reading the first symbol $\phase$ in $w$ 
must coincide with firing the transition from $p_2$ to $q_1$.
At this moment, the automaton has consumed the first factor $u$ of $w$,
and there was a moment where it moved from state $p_1$ to state $p_2$ 
upon reading one of the $\step$ symbols from $u$. 
Supposing that the transition in $\rho$ from $p_1$ to $p_2$ happens 
at the $i$-th symbol $\step$ of $u$, the clock valuation at the moment 
of reaching $q_1$ for the first time must satisfy
$\mathtt{x} = 2(s_i-s_{i-1})+\ldots+2(s_2-s_1)+2s_1$ ($=2s_i$)
and $\mathtt{y} = 0$.
We conclude the following.
 
\begin{claim}
The set of possible clock valuations at the moment of reaching 
the state $q_1$ for the first time is $\{(\mathtt{x}=2a, ~ \mathtt{y}=0) \::\: a\in S\}$.
\end{claim}

Next, observe that the moment of reading the second occurrence of $\spadesuit$ in $w$ 
must coincide with firing the transition from $q_2$ to $r_1$.
Between the first and the second symbol $\spadesuit$ the automaton consumes 
the second factor $u$, and during this the clock $\mathtt{x}$ increases exactly 
by the sum of the time spans within $u$, i.e.~by $2M$.
On consuming the second factor $u$, the clock $\mathtt{y}$ is reset once,
and precisely when firing the transition from $q_1$ to $q_2$, which happens
on reading one of the occurrences of $\step$ in $u$.
Again, if this happens when reading the $j$-th occurrence of $\step$, 
then, after the reset, $\mathtt{y}$ is incremented by exactly $2s_j$ units.
We conclude the following.
 
\begin{claim}
The set of possible clock valuations at the moment of reaching 
the state $r_1$ for the first time is $\{(\mathtt{x}=2a+2M, ~ \mathtt{y}=2b) \::\: a,b\in S\}$.
\end{claim}
 
Finally, after consuming the last factor $v$, 
the automaton can move to the accepting state $r_2$ if and only if at some point,
upon reading an occurrence of $\step$, the condition $x+y=4M$ holds.
Observe that the sum of the first $k$ numbers encoded in $v$ is equal 
to $M-s_{n-k+1}$. Hence, after parsing those numbers, the set of possible 
clock valuations is $\{(\mathtt{x}=2a+2M+M-c, ~ \mathtt{y}=2b+M-c) \::\: a,b\in S\}$,
for some choice of $c\in S$.
Moreover, the latter valuations satisfy the condition $\mathtt{x}+\mathtt{y}=4M$ 
if and only if $a+b=c$.

Based on the above arguments, we infer that a successful run like $\rho$ 
exists on input $w$ if and only if there are $a,b,c\in S$ such that $a+b=c$.
To conclude the proof, we observe that if an algorithm could decide whether 
$\Aa$ accepts $w$ in time $\Oh(n^{2-\delta})$ for any $\delta>0$,
then by combining this algorithm with the presented construction, 
one could solve {\sc{3SUM}} in time $\Oh(n^{2-\delta})$. This would contradict the {\sc{3SUM}} Conjecture.
\end{proof}


Theorem \ref{th:two_clocks} now follows almost directly from the previous lemma. 
Consider the timed automaton $\Aa$ provided by Lemma~\ref{lem:beaver}. 
If a data structure as in the statement of the theorem existed, 
then using this data structure one could decide in strongly sub-quadratic time
whether any input timed word $w$ is accepted by $\Aa$, by simply applying the
sequence of $\read(\cdot)$ operations corresponding to $w$, followed by the query 
$\accepted()$.

\end{document}